\newcommand*\circled[1]{\tikz[baseline=(char.base)]{
            \node[shape=circle,draw,inner sep=2pt] (char) {#1};}}
\Crefname{claim}{Claim}{Claims}
\Crefname{corollary}{Corollary}{Corollaries}
\Crefname{definition}{Definition}{Definitions}
\Crefname{example}{Example}{Examples}
\Crefname{lemma}{Lemma}{Lemmas}
\Crefname{property}{Property}{Properties}
\Crefname{proposition}{Proposition}{Propositions}
\Crefname{remark}{Remark}{Remarks}
\Crefname{theorem}{Theorem}{Theorems}
\begin{document}
\title{Envy-Free and Efficient Allocations for Graphical Valuations}
%
%

\author{Neeldhara Misra \and Aditi Sethia}
\authorrunning{N. Misra and A. Sethia}
%
\institute{Indian Institute of Technology, Gandhinagar \email{\{neeldhara.m,aditi.sethia\}@iitgn.ac.in}}
\maketitle              
\begin{abstract}

We consider the complexity of finding envy-free allocations for the class of graphical valuations. 
Graphical valuations were introduced by Christodoulou et al. \cite{10.1145/3580507.3597764} as a structured class of valuations that admit allocations that are envy-free up to any item(EFX). These are valuations where every item is valued by two agents, lending a (simple) graph structure to the utilities, where the agents are vertices and are adjacent if and only if they value a (unique) common item. Finding envy-free allocations for general valuations is known to be computationally intractable even for very special cases: in particular, even for binary valuations, and even for identical valuations with two agents. We show that, for binary graphical valuations, the existence of envy-free allocations can be determined in polynomial time. In contrast, we also show that allowing for even slightly more general utilities $\{0,1,d\}$ leads to intractability even for graphical valuations. This motivates other approaches to tractability, and to that end, we exhibit the fixed-parameter tractability of the problem parameterized by the vertex cover number of the graph when the number of distinct utilities is bounded.
We also show that, all graphical instances that admit EF allocations also admit one that is non-wasteful. Since EFX allocations are possibly wasteful, we also address the question of determining the price of fairness of EFX allocations. We show that the price of EFX with respect to utilitarian welfare is one for binary utilities, but can be arbitrarily large $\{0, 1, d\}$ valuations. We also show the hardness of deciding the existence of an EFX allocation which is also welfare-maximizing and of finding a welfare-maximizing allocation within the set of EFX allocations. 

\keywords{Fair Division \and Utilitarian Welfare \and Price of Fairness \and Graphical Valuations}
\end{abstract}

\section{Introduction}

Given a set of $n$ agents and $m$ items, dividing all the items among the agents in a manner considered \textit{fair} for every agent is an important assignment problem. The \textit{fairness} aspect can have different interpretations and has been studied in the literature with various lenses: \textit{envy-freeness \cite{GS58puzzle,F67resource}, equitability \cite{GMT14near}, share-based notions \cite{S48division,B11combinatorial} and their approximations \cite{B11combinatorial,LMM+04approximately,FSV+19equitable,GARG2021103547}}. 

Finding allocations that are envy-free is a gold standard in such assignment problems. This entails that no agent should feel envious of any other agent under the allocation. That is, every agent should value its own allocated bundle at least as much as it values anyone else's bundle. The problem with such envy-free allocations is two-fold: existential and computational. That is, they might not exist for many instances (say, when there are more agents than items), and deciding whether they exist is computationally intractable even for very special and structured instances. In particular, it is NP-complete even for binary valuations (where agents value items at either $0$ or $1$) \cite{AZIZ201571} and weakly NP-Complete for two agents and identical valuations \cite{10.1145/988772.988792}.

Motivated by these issues, we focus on a recently introduced class of structured valuations, called graphical valuations introduced by Christodoulou et al.~\cite{10.1145/3580507.3597764} as a class of valuations that admit allocations that are envy-free up to any item (EFX)\footnote{We refer the reader to next section for the definition of EFX.}. These are valuations where every item is valued by exactly two agents, lending a (simple) graph structure to the utilities, where the agents are associated with vertices and items with edges. Two agent-vertices are adjacent if and only if they value a (unique) common edge-item, represented by the edge between them. 
Such valuations may arise in scenarios where agents only value the items that are geographically closer. For instance, in real estate allocation, potential buyers might only be interested in properties within a certain distance from their workplace or amenities; employees might value office spaces closer to their teams and likewise~\cite{10.1145/3580507.3597764}. 

\paragraph{Our Contributions.} We highlight our main contributions below and put them in context with the already-known results.

\begin{itemize}
    \item We show that an EF allocation if it exists, can be found efficiently for graphical valuations where agents have binary ($\{0,1\}$) valuations over the items (\Cref{thm:EFexists01}). This is in contrast to the intractability of EF allocation for binary utilities in general. 
 \item We show that if we allow for even slightly more general valuations than binary, for instance, $\{0, 1,d\}$-valuations for some constant $d$, the problem again becomes intractable (\Cref{thm:EFHard}). 
 
 \item The above hardness motivates a parameterized approach towards tractability and towards that, we present a \textit{fixed-parameter tractable\footnote{An algorithm that runs in time $f(k)poly(n,m)$ where $f$ is some computable function of the parameter $k$.}} algorithm for finding EF allocations for graphical instances with bounded number of distinct utilities, where the parameterization is in terms of the minimum vertex cover of the associated graph $G$ (\Cref{thm:ILP_VC}).

\item We show that if there is an EF allocation for any graphical instance, then there is also an EF allocation that does not `waste' any item, that is, it does not assign an item to an agent who derives $0$ value from it. This shows that if there is an EF allocation, then there is an EF orientation of the graph $G$ (\Cref{lem:AllocationToOrientation}). This result stands in contrast to the fact that an EFX allocation always exists but an EFX orientation may not exist \cite{10.1145/3580507.3597764}. In terms of the price of EF, this implies that for $\{0, 1\}$-graphical valuations, there is no loss in the welfare while achieving EF allocations, whenever they exist.

\item Christodoulou et al. \cite{10.1145/3580507.3597764} showed that EFX allocations not only always exist but can be found efficiently for graphical valuations. But this comes with a sacrifice in terms of welfare. In particular, there are cases where any EFX allocation must assign items to agents for which they are irrelevant ($0$-valued). In this work, we quantify the loss of welfare while achieving EFX allocations and show that for $\{0,1\}$-graphical instances, the price of EFX for Utilitarian (sum of agents' utilities) welfare is $1$ (\Cref{thm:EFXutil}). That is, restricted to binary graphical valuations, there is no loss in any of the welfare notions and an EFX allocation that maximizes the respective welfare can be found efficiently. On the other hand, we show that for slightly general valuations than binary, that is, for $\{0,1,d\}$-valuations, there are instances with a huge loss in the utilitarian welfare and consequently, price of EFX shoots up to $\infty$ (\Cref{thm:pofinfinity}).

\item On the computational side, we show that for general graphical valuations, finding EFX allocations that also maximize utilitarian welfare is NP-Hard (\Cref{thm:utilhard}). It follows that finding a welfare-maximizing allocation within the set of EFX allocations is also hard.

\end{itemize}

\paragraph{Additional Related Work.} Several special cases and approximations have been extensively studied in the fair division literature to understand the extent of tractability of EFX allocations: binary valuations \cite{10.1007/978-3-031-39344-0_19}; bounded number of agents \cite{doi:10.1137/19M124397X,10.1145/3391403.3399511,10.1145/3580507.3597799}; and bounded number of unallocated items \cite{CGH19envy,Berger_Cohen_Feldman_Fiat_2022}. Graphs have also been associated with fair division in various contexts and models. Allocations, where items allocated to each agent form a connected subgraph in a provided item graph, have been studied \cite{deligkas2021parameterized,BILO2022197}. In a different model, Payan et al. 
\cite{10.5555/3545946.3599032} looked at graph-EFX which requires that an agent, represented by a vertex, satisfy EFX only against its adjacent vertices. Our work is closely aligned with that of Christodoulou et al. \cite{10.1145/3580507.3597764} who introduced graphical valuations and showed the hardness of deciding the existence of an EFX orientation. Following this, Zeng and Mehta \cite{zeng2024structure} characterized that graphs with chromatic number at most $2$ admit EFX orientations for any given valuations, while graphs with chromatic number strictly greater than $3$ may not admit such orientations for all valuations. They also characterized EFX orientability for binary valuations. 

The quantification of welfare loss that is inevitable due to the fairness constraint has also been of interest in the literature. To capture this, the notion of \emph{price of fairness} was proposed in the works of Bertsimas et al. \cite{BFT11price} and Caragiannis et al.~\cite{CKK+12efficiency}. Since then, various works have given bounds for the price of proportionality, envy-freeness, EF1, EFX, equitability, EQ1, maximum Nash welfare, and more \cite{10.1145/2781776,BLM+21price,SCD21connections,SCD23equitability,10.1007/978-3-031-43254-5_16}.

\section{Preliminaries} 
\paragraph{Model.} A fair allocation instance $\mathcal{I}:=(N, M, \mathcal{U})$ consists of a set $N$ of $n$ agents, a set $M$ of $m$ items and a set of valuation functions $\mathcal{U} =\{u_1, \ldots u_n\}$ such that $u_i := 2^M \rightarrow \mathbb{R}_{\geq 0}$. Each $u_i$ captures the utility that agent $i$ derives from a set of items in $M$. A valuation function $u_i$ is said to be additive if the value of a bundle is the sum of the values of the items in the bundle. In this work,  we assume that all the valuation functions are additive. An allocation $\Phi:= \{\Phi_1, \ldots \Phi_n\}$ is a partition of $M$ items into $n$ bundles $\Phi_i$, one for each agent. 

\paragraph{Graphical Allocation Instance.} A graphical fair allocation  instance $\mathcal{I} = \{G = (V, E), \mathcal{U}\}$ takes as input an undirected, simple graph $G$ and a valuation function $\mathcal{U}$. The set of vertices $V$ in $G$ corresponds to $n$ agents and the set of edges $E$ in $G$ corresponds to $m$ items to be allocated. We will often use the terms ``items'' and ``edges'' interchangeably because of this correspondence.
Every agent only values a subset of the incident edge-items. Also, note that every edge is valued by exactly two agents, and every pair of agents value at most one edge together, the one which is incident on both of them. A $\{0,1\}$-graphical instance is one such that $u_i \in \{0, 1\} ~\forall~ i \in N$. Given a graph $G$, an \textit{orientation} $O_G$ is an allocation with the additional property that every
edge is assigned to one of the two endpoints. A directed graph that directs the edges of $G$ towards the vertex that receives the edges is called an \textit{orientation graph} of $G$. Note that every orientation corresponds to a complete allocation. An allocation is an orientation if it assigns the edges to the incident vertices. We say that an orientation satisfies a property if the corresponding allocation satisfies that property.

\paragraph{Fairness Notions.} An allocation $\Phi$ is said to be envy-free (EF) if every agent values its bundle at least as much as it values any other allocated bundle. That is, $u_i(\Phi_i) \geq u_i(\Phi_j)~ \forall~ i, j \in N$. Note that an EF allocation may not exist so we resort to approximations. An allocation is said to be EFX if the envy between a pair of agents can be eliminated by removal of \textit{any} item from the envied bundle. That is, $u_i(\Phi_i) \geq u_i(\Phi_j \setminus \{g\})~ \forall~ g \in \Phi_j~\&~ i, j \in N$. Further, we say that an allocation is non-wasteful if it allocates the items to agents who value them. An item is said to be wasted if it is allocated to an agent who derives $0$ utility from it. Any allocation that consists of wasted items is said to be wasteful.

\paragraph{Welfare Notions and Price of Fairness.} The welfare notion that we consider in this work are defined as follows. Utilitarian social welfare of an allocation $\Phi$ is the sum of the individual agent utilities. We say that an allocation is Utilitarian maximal (UM) if it maximizes the utilitarian welfare. Egalitarian welfare of an allocation $\Phi$ is defined as the minimum utility of any agent
under $\Phi$. We say that an allocation is Egalitarian Maximal (EM) if it maximizes the minimum utility of any agent. Nash welfare is defined as the geometric mean of the utilities of the agents. The price of a fairness notion $F$ with respect to a welfare notion $W$ is the supremum over all fair division instances with $n$ agents and $m$ items of the ratio of maximum welfare under any allocation to the maximum welfare under the fair allocation. In particular, the price of EFX with respect to utilitarian welfare is:



%

 \[PoF_{UM} =  \sup_{I \in \mathcal{I}} \frac{\max_{\Phi^\star \in UM(I)} \sum_i v_i(\Phi^\star_i)}{\max_{\Phi \in EFX(I)} \sum_i v_i(\Phi_i) }\]

\section{Envy-Free Graphical Allocations}


Although it is known that EFX allocations always exist on graphical valuations~\cite{10.1145/3580507.3597764}, an EF allocation may not exist on graphical instances as well, as illustrated by a simple example of a graph consisting of only one edge. Whichever incident vertex receives the edge, the other one is bound to be envious. We show that it is possible to determine if an EF allocation exists in polynomial time for $\{0, 1\}$-graphical valuations, and in the event that the instance admits an EF allocation, such an allocation can be found in polynomial time. Before that, we present a series of structural results. The following result is in contrast to the EFX fairness, where the existence of an EFX allocation does not guarantee an EFX orientation but any EF allocation does guarantee an EF orientation.

\begin{theorem}
\label{lem:AllocationToOrientation}
 Given a graphical allocation instance, there is an EF allocation if and only if there is an EF orientation.
\end{theorem}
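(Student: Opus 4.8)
The plan is to dispatch the two directions asymmetrically. The reverse direction is immediate: by definition an orientation \emph{is} an allocation (it assigns every edge to one of its endpoints), so an EF orientation is already an EF allocation and nothing needs to be shown. The content is in the forward direction, where I want to turn an arbitrary EF allocation into an EF orientation. The bridge is the following observation about the graphical structure: an edge-item $e=(u,v)$ is valued positively only by its two endpoints, and every other agent values it at $0$. Consequently an allocation is an orientation precisely when it is \emph{non-wasteful} (no item is assigned to an agent who derives $0$ from it). So it suffices to prove that whenever an EF allocation exists, a non-wasteful EF allocation exists, and I would prove this by an extremal argument.

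Concretely, among all EF allocations I would pick one, $\Phi$, that \emph{minimizes the number of wasted edges}, and suppose for contradiction that some edge $e=(u,v)$ is wasted, say $e\in\Phi_w$ with $w\notin\{u,v\}$. Two facts drive the argument. First, since $e\in\Phi_w$ and both endpoints value $e$, envy-freeness forces $u(\Phi_u)\ge u(\Phi_w)\ge u(e)$ and $v(\Phi_v)\ge v(\Phi_w)\ge v(e)$, so both endpoints already hold high-value bundles. Second, consider transferring $e$ from $w$ to the endpoint $u$. Removing a $0$-valued item from $\Phi_w$ cannot make anyone envy $w$, and adding $e$ to $\Phi_u$ raises the value of $\Phi_u$ only for agents who value $e$, namely $u$ and $v$; since $u$ cannot come to envy itself, the \emph{only} envy this transfer can create is $v$ toward $u$. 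Hence the transfer preserves EF unless $v(\Phi_u)+v(e)>v(\Phi_v)$. In particular, if $v(\Phi_u)=0$ the transfer to $u$ is safe because $v(e)\le v(\Phi_v)$; symmetrically, if $u(\Phi_v)=0$ the transfer to $v$ is safe. Either safe transfer un-wastes $e$ without wasting anything else, contradicting the minimality of $\Phi$.

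This leaves the genuinely hard case, in which $v(\Phi_u)>0$ and $u(\Phi_v)>0$ and neither direct transfer alone preserves envy-freeness. My plan here is to transfer $e$ to $u$ and simultaneously move an edge (or set of edges) of $\Phi_u$ incident to $v$ back to $v$, chosen so as to cancel the envy that $v$ would otherwise feel; this compensating move may in turn create envy at the far endpoint of a relocated edge, triggering a chain of further compensations. I would argue that this chain is an augmenting path through the ``tight'' envy relations that must terminate at an agent with strict slack, at which point every relocation has kept previously-non-wasted edges non-wasted while the initial step strictly decreased the wasted-edge count — the desired contradiction. I expect the termination-and-correctness of this compensating chain to be the main obstacle: one must rule out non-terminating propagation and verify that global envy-freeness is restored at every step, and in the non-binary regime the difficulty is sharper, since a relocated edge's value need not exactly cancel the induced envy and one must instead rely on a carefully chosen subset of edges or on a strictly improving monovariant (such as utilitarian welfare) rather than a single clean swap. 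Making that potential provably improve while EF is preserved is the crux of the proof.
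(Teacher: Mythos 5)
There is a genuine gap. Your reverse direction and your ``safe transfer'' observations are fine, but the case you yourself identify as the crux --- $v(\Phi_u)>0$ and $u(\Phi_v)>0$, where a single transfer of $e=(u,v)$ could push $v(\Phi_u)+v(e)$ above $v(\Phi_v)$ --- is left as a plan (an augmenting chain of compensating swaps through ``tight'' envy relations) whose termination and correctness you do not establish. As written, the argument does not go through: you have not exhibited the monovariant, ruled out cycling, or shown that envy-freeness is restored after each compensation, and you acknowledge as much.

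The deeper issue is that this hard case is an artifact of moving one edge at a time while other misplaced edges remain in $\Phi_u$. The paper's proof dissolves it by reallocating \emph{all} edges held by non-endpoints simultaneously, each to an arbitrary endpoint. The resulting $\Phi'$ is an orientation, so every bundle $\Phi'_i$ consists only of edges incident on $i$, of which agent $j$ values at most one, namely the unique shared edge $e=(ij)$. Hence $u_j(\Phi'_i)=u_j(e)$, and new envy $j\to i$ would force $u_j(e)>u_j(\Phi'_j)\ge u_j(\Phi_j)$; but then $j$ already envied whoever held $e$ under $\Phi$, contradicting that $\Phi$ was EF. The quantity $v(\Phi_u)$ that causes your trouble simply cannot be positive after the global reallocation (any $v$-valued edge sitting in $u$'s bundle other than $(u,v)$ is itself misplaced and gets moved away), so no extremal choice, chain, or potential function is needed. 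A secondary, fixable slip: ``orientation iff non-wasteful'' is not an equivalence --- an orientation may hand an edge to an endpoint valuing it at $0$, and an edge valued at $0$ by both endpoints can never be allocated non-wastefully --- so the right reduction target is ``every edge sits at an endpoint,'' not ``no edge is wasted.''
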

\begin{proof} An orientation is EF if the corresponding allocation is EF, so the reverse direction holds. We argue the forward direction.

Suppose there is an EF allocation $\Phi$ for the given instance, which does not correspond to any EF orientation. We assume that everyone values at least one item, otherwise the agent can be removed from the instance. Since $\Phi$ is not an orientation, there must be some edges allocated to vertices that are not incident on them. All such edges are allocated wastefully as an agent does not value an edge that is not incident on itself. Consider the re-allocation $\Phi^\prime$ such that all such wastefully allocated edges are re-allocated to one of their incident vertices, chosen arbitrarily. Say, edge $e = (ij)$ which was previously wastefully allocated to vertex $k$ is now re-allocated to $i$, WLOG. 

Under $\Phi^\prime$, an agent who loses an item can not envy anyone, as its utility does not decrease.
Any agent can potentially be envious of only those agents that are incident on it. Indeed, if $i$ is not incident to agent $k$, then $v_i(\Phi_k) = 0$ as $k$ only receives the edges incident on itself, none of which are valued by $i$. Moreover, suppose $j$ is envious of $i$ under $\Phi^\prime$ as $i$ receives the edge $e =(ij)$ that is also valued by $j$. Notice that $e$ is the only item that is valued by $j$ in the bundle $\Phi_i$ since it is the unique item valued by both $i$ and $j$. Therefore, if $j$ is envious of $i$, we have $u_j(\Phi_i^\prime) = u_j(e) > u_j(\Phi_j^\prime) \geq u_j(\Phi_j)$. The last inequality holds as no agent's utility decreases under the re-allocation $\Phi^\prime$. This implies that $j$ valued $e$ more than the bundle it got under the EF allocation $\Phi$. But then, $u_j(\Phi_j) < u_j(e) \leq u_j(\Phi_k)$, where $k$ is the recipient of $e$ under $\Phi$. This implies that $j$ was envious of $k$ in the allocation $\Phi$, which is a contradiction to the fact that $\Phi$ was EF. Therefore, all the agents are EF under $\Phi^\prime$, and $\Phi^\prime$ assigns edges to only incident vertices. Therefore, $\Phi^\prime$ corresponds to an EF orientation.
\end{proof}

\begin{lemma}
\label{lem:EForientation}
    Given any graphical allocation instance, suppose $v_i^{max}$ is the maximum value any agent $i$ has for any item. Then, a non-wasteful allocation is EF if and only if $i$ gets a utility of at least $v_i^{max} ~\forall~ i \in [n]$.
\end{lemma}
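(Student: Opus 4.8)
The plan is to exploit the defining feature of graphical valuations: each agent assigns positive value only to the items (edges) incident on it, and any two agents share at most one such item. First I would record the consequence of non-wastefulness. Since every item is valued positively by exactly the two endpoints of its edge, a non-wasteful allocation $\Phi$ must hand each item to one of its two endpoints, so for every agent $j$ the bundle $\Phi_j$ consists solely of edges incident on $j$ (equivalently, a non-wasteful allocation is an orientation). Fixing a pair of agents $i,j$, the only item in $\Phi_j$ that $i$ can value is the unique common edge $e_{ij}$, should it exist and be oriented towards $j$; every other edge in $\Phi_j$ is not incident on $i$ and contributes nothing to $u_i$. Hence $u_i(\Phi_j) \in \{0,\, u_i(e_{ij})\}$, and in either case $u_i(\Phi_j) \le v_i^{max}$ by the definition of $v_i^{max}$ as the largest single-item value of agent $i$.

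With this observation in hand, both directions are immediate. For the ``if'' direction, I would suppose every agent $i$ receives $u_i(\Phi_i) \ge v_i^{max}$. Then for any other agent $j$ we get $u_i(\Phi_i) \ge v_i^{max} \ge u_i(\Phi_j)$, so no agent envies any other and $\Phi$ is EF.

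For the ``only if'' direction, I would suppose $\Phi$ is EF, fix an agent $i$, and let $e$ be an item with $u_i(e) = v_i^{max}$; this is an edge incident on $i$, say $e = e_{ij}$. Since $\Phi$ is non-wasteful, $e$ is allocated to either $i$ or $j$. If $i$ receives $e$, then $u_i(\Phi_i) \ge u_i(e) = v_i^{max}$. Otherwise $j$ receives $e$, whence $u_i(\Phi_j) = u_i(e) = v_i^{max}$, and envy-freeness gives $u_i(\Phi_i) \ge u_i(\Phi_j) = v_i^{max}$. Either way $u_i(\Phi_i) \ge v_i^{max}$, as required.

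I do not anticipate a genuine obstacle here; the whole content is the structural observation that in a non-wasteful allocation the cross-utility $u_i(\Phi_j)$ can only arise from the single shared edge and is therefore capped by $v_i^{max}$. The one point to state carefully is that non-wastefulness forces each bundle to contain only incident edges, since this is exactly what rules out $i$ deriving value from items in $\Phi_j$ other than $e_{ij}$. Agents with no incident items have $v_i^{max} = 0$, so the claim holds vacuously for them.
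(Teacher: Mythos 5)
Your proposal is correct and follows essentially the same route as the paper: non-wastefulness forces the allocation to be an orientation, so the cross-utility $u_i(\Phi_j)$ is either $0$ or the value of the unique shared edge and hence at most $v_i^{max}$, which gives the ``if'' direction, while the ``only if'' direction tracks where a maximum-valued edge of $i$ lands. The only cosmetic difference is that the paper argues the ``only if'' direction by contradiction (and without needing non-wastefulness, since $v_i(\Phi_j)\ge v_i(e)$ already follows from additivity), whereas you use a direct case split on which endpoint receives $e$; the substance is identical.
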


\begin{proof}
    Let $\Phi$ be any EF allocation. Let $v_i^{max}$ be the maximum value an agent $i$ has for an edge $e$. Suppose $v_i(\Phi_i) < v_i^{max}$, then clearly, $e \notin \Phi_i$. Let $e \in \Phi_j$ for some agent $j$. Then $v_i(\Phi_i) < v_i(e) = v_i(\Phi_j)$. Therefore, $i$ is envious of $j$ which is a contradiction to the fact that $\Phi$ is EF. Therefore, every agent $i$ must get a utility of at least $v_i^{max}$ under an EF allocation $\Phi$. Conversely, suppose every agent gets a utility of at least $v_i^{max}$ under a non-wasteful allocation $\Phi$. Since $\Phi$ is a non-wasteful allocation, it corresponds to an orientation in $G$. So every agent receives a subset of edges that are incident on it. Consider an agent $i$. We have $u_i(\Phi_i) \geq v_i^{max}$. Consider any other agent $j$ incident on $i$. If the edge $e =(ij) \in \Phi_j$, then $u_i(\Phi_j) = u_i(e) \leq v_i^{max}$, else $u_i(\Phi_j) = 0$, as $i$ does not value any edge incident on $j$ except $e$. Also, for any agent $j$ not incident on $i$, $u_i(\Phi_j) = 0$ as $i$ does not value any edge which is not incident on itself. Therefore, we have that $u_i(\Phi_i) \geq u_i(\Phi_j)$ for all $1 \leq i \neq j \leq n$ and hence the orientation is EF.
\end{proof}


This gives us the following corollary. 

\begin{corollary}
\label{cor:EFagent} For graphical instances,  if an agent $i$ gets a utility of at least $v_i^{max}$ under a partial orientation $O_P$, then $i$ remains EF under any extension of $O_P$.
\end{corollary}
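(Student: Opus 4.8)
The plan is to derive this corollary directly from \Cref{lem:EForientation}, treating it as a statement about partial orientations rather than complete allocations. The key observation is that \Cref{lem:EForientation} already tells us that under a non-wasteful allocation, an agent $i$ is envy-free precisely when its utility is at least $v_i^{max}$. Since a partial orientation assigns edges only to incident vertices, it is automatically non-wasteful, and so is any extension of it.

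First I would note that in any orientation, the utility an agent $i$ derives is monotone non-decreasing as we extend the orientation: extending $O_P$ only assigns additional incident edges, and since each agent values its incident edges non-negatively, agent $i$ can only gain utility (or keep it the same) under any extension. In particular, if $u_i(O_P) \geq v_i^{max}$, then $u_i(O') \geq v_i^{max}$ for any extension $O'$ of $O_P$.

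Next I would invoke the characterization embedded in the proof of \Cref{lem:EForientation}. The crucial structural fact is that agent $i$'s envy is \emph{local}: for any other agent $j$, we have $u_i(O'_j) \leq v_i^{max}$, because $i$ values at most one edge in $j$'s bundle (the unique common edge $e = (ij)$ if it is oriented towards $j$, worth at most $v_i^{max}$), and $u_i(O'_j) = 0$ whenever $j$ is not adjacent to $i$. This upper bound on what $i$ can value in anyone else's bundle holds in \emph{any} orientation, and in particular in every extension $O'$ of $O_P$. Combining this with the lower bound $u_i(O'_i) \geq v_i^{max}$ from the monotonicity step yields $u_i(O'_i) \geq v_i^{max} \geq u_i(O'_j)$ for all $j \neq i$, so $i$ is not envious of anyone under $O'$.

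I do not anticipate a genuine obstacle here, since the corollary is essentially the ``one agent at a time'' reading of the forward direction of \Cref{lem:EForientation}. The only point requiring slight care is making explicit that the bound $u_i(O'_j) \leq v_i^{max}$ is a property of orientations that survives extension, rather than something that depends on the orientation being complete; once that is observed, the argument is a direct combination of monotonicity of $i$'s own utility and the uniform upper bound on $i$'s value for every other bundle.
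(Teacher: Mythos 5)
Your proposal is correct and matches the paper's intent: the paper states this corollary as an immediate consequence of \Cref{lem:EForientation}, and your argument is exactly the reasoning from the converse direction of that lemma (the uniform bound $u_i(O'_j) \leq v_i^{max}$ for every other bundle in any orientation), supplemented with the straightforward monotonicity observation needed to pass from the partial orientation to its extensions.
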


In particular for binary valuations, there is an EF allocation where every item is allocated to an agent who values it at $1$, so, we have the following result.
\begin{corollary}
    For $\{0,1\}$-graphical instances, the price of EF with respect to utilitarian social welfare is $1$.
\end{corollary}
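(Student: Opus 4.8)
The plan is to show that the supremum defining the price of fairness is attained at $1$ by establishing two inequalities. The lower bound $PoF \geq 1$ is immediate, since the maximum in the numerator ranges over \emph{all} allocations while the maximum in the denominator ranges only over EF allocations, so the numerator can never be smaller; note also that the quantification is implicitly over instances that actually admit an EF allocation, so that the denominator is well defined. The substance is therefore the upper bound: I would argue that every $\{0,1\}$-graphical instance admitting an EF allocation also admits an EF allocation whose utilitarian welfare equals the unconstrained optimum. For such instances the optimum is easy to pin down: since each item is valued only by its two incident agents and every value is $0$ or $1$, each item contributes at most $1$ to the welfare, and the maximum welfare $W^\star$ equals the number of items that are valued $1$ by at least one of their two endpoints (assign each such item to a valuing endpoint; items valued $0$ by both endpoints contribute nothing under any allocation).

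First I would invoke \Cref{lem:AllocationToOrientation} to pass from an arbitrary EF allocation to an EF orientation $\Phi$, so that every item sits on an incident vertex. By the forward direction of \Cref{lem:EForientation}, whose proof applies to any EF allocation and does not rely on non-wastefulness, each agent $i$ receives utility at least $v_i^{max}$ under $\Phi$; for binary valuations $v_i^{max} \in \{0,1\}$, so every agent that values something already holds an item it values at $1$. Next I would repair the welfare: for each item $e = (ij)$ that is valued $1$ by some endpoint but currently assigned to the endpoint valuing it $0$, reassign $e$ to a valuing endpoint. The resulting allocation $\Phi'$ is still an orientation, and by construction every item valued $1$ by an endpoint now sits on a valuing vertex, so $\Phi'$ has utilitarian welfare exactly $W^\star$.

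It remains to check that $\Phi'$ is EF, and this is the only delicate point. The key invariant is that no agent's utility ever decreases during the repair: an agent that loses items loses only items it values at $0$, while an agent that gains items gains only items it values at $1$. Hence every agent still has utility at least $v_i^{max}$ under $\Phi'$. Since $\Phi'$ is an orientation, applying \Cref{cor:EFagent} agent by agent (equivalently, the reasoning in the converse direction of \Cref{lem:EForientation}) shows that each agent $i$ values its own bundle by at least $v_i^{max}$ and any other bundle by at most $v_i^{max}$, so $\Phi'$ is EF. I expect the main obstacle to be phrasing the repair so that envy-freeness is preserved \emph{simultaneously} for all reassigned items rather than one at a time; the invariant above resolves this cleanly, and one should also note that items valued $0$ by both endpoints, being irrelevant to both welfare and envy, may simply be left on an arbitrary incident vertex. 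Combining the two inequalities yields $PoF = 1$, and the construction is clearly polynomial-time.
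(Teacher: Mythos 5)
Your proof is correct and follows essentially the same route the paper intends: the paper justifies this corollary with the one-line observation that ``there is an EF allocation where every item is allocated to an agent who values it at $1$,'' resting on \Cref{lem:AllocationToOrientation} and \Cref{lem:EForientation} exactly as you do. Your explicit repair step (reassigning items held by a $0$-valuing endpoint to a $1$-valuing endpoint, with the invariant that no agent's utility decreases) just fills in the detail the paper leaves implicit, and it is sound since the converse direction of \Cref{lem:EForientation} only needs the orientation structure.
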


Note that the above result is not true for binary valuations in general. Consider the instance in \Cref{tab:table1}. It is not a graphical instance as $a_1$ and $a_3$ value $3$ items positively. An EF allocation must allocate at least one item from $\{g_1, g_3, g_4\}$ wastefully. Indeed, if all of them are allocated non-wastefully, then the agent who ends up receiving two of them is envied by the other one. 

\begin{table}
\small
    \centering
    \begin{tabular}{c|cccc}
         & $g_1$ & $g_2$  & $g_3$ & $g_4$ \\ \hline
      $a_1$   & \circled{$1$} & $0$ & $1$ & $1$ \\
       $a_2$  & $0$ & \circled{$1$} & $0$  & \circled{$0$} \\
       $a_3$  & $1$ & $0$ & \circled{$1$}  & $1$ \\
    \end{tabular}
    \caption{An EF allocation that allocates an item wastefully.}
    \label{tab:table1}
\end{table}
 
\begin{theorem}
\label{thm:EFexists01}
    For $\{0, 1\}$-graphical instances, an EF allocation can be found efficiently, if it exists.
\end{theorem}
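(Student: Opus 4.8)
The plan is to leverage \Cref{lem:AllocationToOrientation} to reduce the search for an EF allocation to the search for an EF orientation, and then to show that finding an EF orientation for a binary instance is equivalent to a bipartite matching problem. Call an agent \emph{demanding} if $v_i^{max}=1$, i.e.\ it values at least one incident edge; every non-demanding agent has $v_i^{max}=0$ and is trivially envy-free since it always gets utility $\geq 0$. By \Cref{lem:EForientation} together with \Cref{cor:EFagent}, an orientation is EF as soon as every demanding agent receives at least one edge it values at $1$: once each agent has reached its threshold $v_i^{max}$ under a partial orientation, \Cref{cor:EFagent} guarantees that any completion of that orientation keeps everyone envy-free. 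So the whole question collapses to whether we can orient the edges so that every demanding agent captures at least one of the edges it values.

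Next I would observe that this is precisely a question about a system of distinct representatives. I would build a bipartite graph $B$ with the demanding agents on one side and the items (edges of $G$) on the other, joining agent $i$ to item $e$ exactly when $v_i(e)=1$. The key point is that an edge $e=(ij)$ valued by both of its endpoints can be oriented towards $i$ or towards $j$ but not both, so it can act as the ``satisfying'' edge of at most one of them. Hence satisfying all demanding agents simultaneously is exactly the task of finding a matching in $B$ that saturates the demanding side, and such a maximum matching can be computed in polynomial time (e.g.\ via augmenting paths).

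For correctness I would argue both directions. If $B$ admits a matching saturating all demanding agents, orient each matched edge towards its matched agent (this is consistent, since distinct agents are matched to distinct edges) and orient every remaining edge arbitrarily; every demanding agent then has utility at least $v_i^{max}=1$, so by \Cref{cor:EFagent} the resulting orientation is EF. Conversely, if some EF allocation exists, then by \Cref{lem:AllocationToOrientation} there is an EF orientation $O$; in $O$ every demanding agent must receive at least one edge it values, since otherwise it would envy the endpoint holding such an edge, contradicting envy-freeness. Because each oriented edge points to a single endpoint and only the two endpoints of $e$ value it, selecting one valued in-edge per demanding agent yields a matching in $B$ saturating the demanding side. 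Thus an EF allocation exists if and only if $B$ has a matching saturating the demanding agents, and the construction above converts such a matching into an EF orientation-based allocation in polynomial time.

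The main obstacle to watch is the contention between two adjacent demanding agents over their unique common edge: it is exactly this contention that rules out a naive greedy assignment and forces the matching formulation. The remaining steps — verifying that the unmatched edges may be completed freely and that non-demanding agents never create envy — follow directly from \Cref{lem:EForientation} and \Cref{cor:EFagent}, and if desired one can orient each leftover edge towards a valuing endpoint (when one exists) to keep the final allocation non-wasteful as well.
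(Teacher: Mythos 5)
Your proposal is correct, and it takes a genuinely different route from the paper. The paper first handles asymmetric edges (orienting each toward its unique valuer, creating ``special'' vertices) and $0$--$0$ edges, and then performs an explicit case analysis on each remaining connected component of symmetric $1$--$1$ edges: a tree component ($|V|=|E|+1$) admits an EF orientation iff it contains a special vertex to serve as the root of a parent-to-child assignment, while a component containing a cycle always does (orient the cycle cyclically, contract it, and root a spanning tree there). You instead observe that, by \Cref{lem:AllocationToOrientation}, \Cref{lem:EForientation} and \Cref{cor:EFagent}, the entire question reduces to whether every agent with $v_i^{max}=1$ can be assigned a distinct edge it values, i.e.\ a system of distinct representatives, and you solve this as a bipartite matching problem saturating the demanding agents. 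Both directions of your equivalence are sound: a saturating matching yields a consistent partial orientation meeting every threshold $v_i^{max}$, which \Cref{cor:EFagent} lets you complete arbitrarily, and conversely any EF orientation must hand each demanding agent some valued edge, and these edges are necessarily distinct. Your formulation is more uniform and avoids the case analysis entirely (in effect, the paper's tree-versus-cycle dichotomy is a hand-rolled Hall's-condition check for this specially structured bipartite graph, where each item has degree at most two on the agent side); what the paper's explicit construction buys in exchange is structural insight that is reused almost verbatim in the proof of \Cref{thm:EFXutil}, where the tree case with no special vertex must still be handed an EFX (rather than EF) orientation.
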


\begin{proof} Consider an instance $\mathcal{I}$ of $\{0, 1\}$-graphical valuations. Since an EF allocation exists if and only if there is an EF orientation (\Cref{lem:AllocationToOrientation}), we will construct an EF orientation if it exists. For all asymmetric edges $e = (ij)$, we orient them towards the incident agent who values $e$ at $1$, say $i$. This does not create any envy in the graph as the only agent who values $e$ is $i$. We call such vertices $i$ as special vertices since they remain envy-free under any completion of the allocation and are not envied by anyone else. Once we orient all the asymmetric edges, we remove them from the graph. The edges which are valued at $0$ by both end-points are oriented arbitrarily and removed from the graph. This gives us a collection of connected subgraphs $H = \{H_1, H_2, \ldots H_k\}$ such that all edges in $H$ are symmetric and valued at $1$ by both the end-points. For each $H_i \in H$, we consider the following cases:

\begin{enumerate}
\item \label{item:thm2case1} $H_i$ is a tree. Then, $V(H_i) = E(H_i)+1$. By pigeonholing, at least one agent, say $i$, does not receive any edge item from $E(H_i)$. Such a vertex $i$ is always envious under any allocation unless $i$ is already a special vertex. In the former case, there is no complete EF allocation. Otherwise, if there is a special agent $i$, then we root $H_i$ either on $i$ and construct an orientation such that every vertex gets an edge item from its parent. This way, everyone except $i$ receives a utility of at least $1$ from the edges in $H_i$ and hence is EF in any complete orientation. Also, $i$ is EF since it is a special vertex.

\item \label{item:case2} $H_i$ contains a cycle, say $C = \{v_1, v_2, \ldots v_c, v_1\}$. We orient the edges $(v_i, v_{i+1})$ towards $v_i$ and $(v_c, v_1)$ towards $v_c$. Then, every vertex in the cycle is EF as $v_i(\Phi_i) \geq 1$ and remains EF in any completion of this orientation (\Cref{cor:EFagent}). Therefore, the edges inside the cycle can be oriented arbitrarily. 
We now remove the cycle $C$ from $H_i$, replace it with a vertex $c$, and construct a spanning tree of $H_i$ rooted at $c$. We then construct an orientation that allocates every vertex in the spanning tree, except $c$, an edge from its parent. This implies that every agent in the spanning tree except the root $c$ ends up with a utility of at least $1$. All agents corresponding to the root $c$ already had a utility of at least $1$. Since all the agents in $H_i$ now have utility at least $1$, therefore everyone is EF in any completion of the partial orientation. Therefore, the remaining edges in $H_i$ can be oriented arbitrarily, and hence we get an EF allocation for $H_i$.
\end{enumerate}

The algorithm loops over every $H_i$ in $H$ and if there is an EF allocation for every $H_i$, it corresponds to a complete EF allocation (since vertices across components do not envy each other). Else, if there is at least one $H_i$ for which there is no EF allocation, then the algorithm outputs that no complete EF allocation exists. This is true because an envious agent in $H_i$ can not be made EF by any of the edges in the other components, as it does not value them. This settles our claim.
\end{proof}

 We now show in the following result that if we slightly generalize from binary to $\{0,1, d\}$ graphical valuations, it becomes hard to decide if the graphical instance admits an EF allocation.

\begin{theorem}
\label{thm:EFHard}
    Deciding whether an EF allocation exists is NP-Hard even for symmetric $\{0,1,d\}$-graphical valuations.
\end{theorem}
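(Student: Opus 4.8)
The plan is to prove NP-hardness by reducing from a known NP-complete problem, and the natural candidate is a partition/packing-type problem, since the presence of the value $d$ alongside $\{0,1\}$ immediately suggests encoding numerical balancing constraints. By \Cref{lem:AllocationToOrientation} and \Cref{lem:EForientation}, deciding EF existence on a graphical instance is equivalent to deciding whether there is an orientation in which every agent $i$ receives utility at least $v_i^{\max}$ (its largest single-item value). So the reduction only needs to build a graph whose edges carry weights in $\{0,1,d\}$, symmetrically, such that a satisfying orientation exists precisely when the source instance is a ``yes''. I would aim to reduce from a restricted variant of \textsc{Exact Cover by 3-Sets} or, more likely given the degree-like structure of graphs, from a problem such as \textsc{3-Dimensional Matching} or \textsc{Monotone/Restricted 3-SAT}, where each edge-item is a $d$-valued ``choice'' gadget.

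First I would fix a value of $d$ (the statement allows $d$ to be a constant, so I would pick a convenient one, e.g. $d = 2$ or $d$ large) and design \emph{variable gadgets} and \emph{clause/constraint gadgets}. The key pressure point from \Cref{lem:EForientation} is that any agent with an incident edge of value $d$ has $v_i^{\max} = d$, and so \emph{must} capture a $d$-edge (or enough unit edges to reach $d$) to avoid envy. This ``threshold'' requirement is exactly the combinatorial switch I would exploit: a $d$-edge shared between two $d$-valuing agents forces a binary decision (which endpoint takes it), and whichever agent loses it must then recover value $d$ from its other incident edges. By arranging unit-valued edges around each vertex carefully, I can make ``agent $i$ does not grab this $d$-edge'' feasible only when a corresponding logical condition holds, thereby propagating truth assignments through the graph.

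The key steps, in order, would be: (i) describe the gadget for each variable as a small symmetric subgraph with a $d$-edge whose orientation encodes true/false; (ii) describe the clause/constraint gadget that connects to the relevant variable gadgets via shared vertices or unit edges, engineered so that the agent at the clause vertex reaches its threshold $v^{\max}$ iff at least one incident variable is oriented the ``satisfying'' way; (iii) argue the forward direction, that a solution to the source instance yields an orientation meeting all thresholds (hence an EF orientation, hence an EF allocation by \Cref{lem:AllocationToOrientation}); and (iv) argue the reverse direction, that any EF orientation---which by \Cref{lem:EForientation} must give every agent at least its max value---forces a consistent assignment solving the source instance. Throughout I would keep all edge weights symmetric (both endpoints value each edge identically) to match the ``symmetric $\{0,1,d\}$'' hypothesis, and verify the construction is polynomial-size.

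The main obstacle I expect is the tension between the graphical constraint and the need to route many constraints through each variable: because \emph{every} edge is valued by exactly two agents and each pair of agents shares \emph{at most one} edge, I cannot freely attach arbitrarily many incident edges with the multiplicities a generic SAT or matching reduction wants. So the real design work is making the gadgets \emph{locally} satisfy the threshold logic using only simple-graph adjacencies, likely by introducing auxiliary ``padding'' vertices and unit edges that let each vertex reach or fall short of $d$ as needed without violating simplicity. Ensuring that these padding gadgets do not accidentally create new envy---i.e., that auxiliary agents always meet their own (smaller) thresholds regardless of the chosen orientation---is the delicate verification I would spend the most care on, again leaning on \Cref{cor:EFagent} to certify that once a vertex reaches its $v^{\max}$ it stays envy-free under any completion.
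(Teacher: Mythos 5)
Your proposal is a plan for a proof rather than a proof: the entire content of the reduction --- the concrete gadgets, the source problem, and the verification of both directions --- is left unspecified. You correctly identify the right leverage point (via \Cref{lem:AllocationToOrientation} and \Cref{lem:EForientation}, EF existence is equivalent to the existence of an orientation giving each agent $i$ at least $v_i^{\max}$, and a shared $d$-edge forces the losing endpoint to recover value $d$ from unit edges), and you correctly flag the main obstacle (the simple-graph constraint prevents attaching edges with arbitrary multiplicities, which is exactly what generic SAT/3DM gadgetry wants). But you do not resolve that obstacle, and until you do, there is no theorem. In particular, for a fixed constant $d$ such as $d=2$, an agent who misses its $d$-edge only needs two unit edges, which gives very weak propagation; the construction genuinely depends on how $d$ scales with the instance, and your sketch does not commit to this.

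The paper's proof avoids the gadget-design difficulties entirely by choosing a different source problem: \textsc{Multi-Colored Independent Set} on a $d$-regular graph. The original graph's edges become unit items; for each color class $V_i$ a single new agent $w_i$ is attached to all of $V_i$ by symmetric $d$-valued edges, where $d$ is the degree of the regular graph. Then $w_i$ must capture one of its $|V_i|$ incident $d$-edges, so by pigeonhole some $s_i \in V_i$ gets no $d$-edge and must instead collect all $d$ of its unit edges in $G$; this is simultaneously possible for all $i$ exactly when the $s_i$ form a multicolored independent set. Regularity is what makes the threshold $v_i^{\max} = d$ coincide exactly with the number of unit edges available to each vertex, so no padding or auxiliary vertices are needed. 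If you want to salvage your approach, the lesson is that the source problem should be chosen so that the threshold arithmetic comes for free; otherwise you must actually exhibit and verify the padding gadgets you allude to, which is where your writeup currently stops.
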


\begin{proof} We present a reduction from~\textsc{Multi-Colored Independent Set (MCIS)}, where given a regular graph $G = (V_1 \uplus \cdots \uplus V_k, E)$, the problem is to decide if there exists a subset $S \subseteq V(G)$ such that $G[S]$ is an independent set and $|V_i \cap S| = 1$ for all $i \in [k]$. We construct the graphical instance as follows.
    All vertices in $V(G)$ correspond to agents and all edges in $E(G)$ to items. Every agent $v \in V(G)$ values its incident edges at $1$. That is, all edges in $G$ are symmetric with a weight of $1$. For every vertex partition $V_i$, we add a vertex-agent $w_i$ adjacent to all the vertices in $V_i$. Every edge $\{(w_i, v): v \in V_i\}$ is a symmetric edge such that $w_i$ and $v$ value it at $d$, where $d$ is the degree of any vertex in the (regular) graph $G$. This completes the construction. A schematic of this construction is shown in \Cref{fig:EFHard}. We now argue the equivalence.

    \begin{figure}
    \centering
    \includegraphics[width=0.5\linewidth]{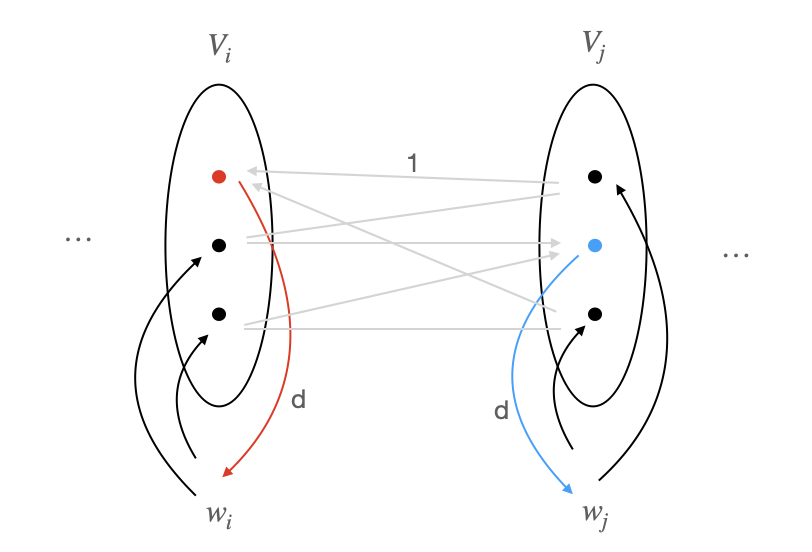}
    \caption{A schematic of reduced instance in the proof of \Cref{thm:EFHard}.}
    \label{fig:EFHard}
\end{figure}

    \paragraph{Forward Direction.}
    Suppose~\textsc{MCIS} is a Yes-instance and there is an independent set $S = \{s_1, s_2, \ldots s_k\} \subseteq V(G)$ such that $G[S]$ is an independent set and $|V_i \cap S| = 1$. Then, we do the following orientation of $E(G)$ to get an EF allocation. 
    \begin{itemize}
    \item $\{(s_i, w_i): i \in [k]\}$ are oriented towards $w_i$.
    \item $\{(v, w_i): v \in V_i \setminus \{s_i\}\}$ are oriented towards $v$.
    \item $\{(s_i, v) : v \in N(s_i) \setminus \{w_i\} \}$ are oriented towards $s_i$.
    \item All the remaining edges are oriented arbitrarily.
    \end{itemize} 

Let $\Phi$ be the allocation corresponding to the above orientation. Then, $u_{w_i}(\Phi_{w_i}) = d$. Note that since all the other edge-items valued at $d$ by $w_i$ are allocated to distinct agents in $V_i$, hence $u_{w_i}(\Phi_j) \leq d$ for any $j \in V(G)$, so all agents $\{w_i : i \in [k]\}$ are envy-free. Similarly all $v \in V_i \setminus \{s_i\}$ have a utility of $d$ each for $\Phi_v$ and a utility of at most $d$ for any other bundle. So, all such agents are envy-free. The remaining agents $\{s_i: i \in [k]\}$ get a utility of $d$ from the $d$ distinct edge-items ($|N(s_i)\setminus w_i| = d$) valued at $1$ each in their respective bundles. Note that all $s_i$ also value any other bundle at atmost $d$ and hence $\{s_i: i \in [k]\}$ are envy-free. This implies that $\Phi$ is an EF allocation.

\paragraph{Reverse Direction.}
Suppose there is an EF allocation $\Phi$ in the reduced instance. Under $\Phi$, each of the $w_i's$ must get at least one incident edge-item to be envy-free. Otherwise, $u_{w_i}(\Phi_i) = 0$ but $w_i$ values every bundle that ends up with any edge-item $\{(w_i, v): v \in V(G)\}$ at $d$, and hence is envious.
Also, since there are only $|V_i|-1$ edge-items valued at $d$ by all the $|V_i|$ agents in $V_i$, so by pigeon-holing, there is at least one agent in every partition $V_i$, say $s_i \in V_i$, which does not end up with a $d$-valued item. Since $u_{s_i}(\Phi_{w_i}) = d$, therefore, $s_i$ must get a utility of at least $d$ from the remaining items in order to be envy-free. This is feasible only if all the agents $\{s_i: i \in [k]\}$ get the respective $d$ edge-items incident on them in the original graph $G$. This implies that $\{s_1, s_2, \ldots s_k\}$ must form an independent set in the original graph $G$. This settles the reverse direction.
\end{proof}

Given the hardness of finding EF allocation for $\{0,1,d\}$-graphical valuations, we consider the parameterized tractability in this context. On a positive note, we show that the problem admits an FPT algorithm parameterized by the vertex cover number of the associated graph, which is the size of the smallest vertex cover (a set of vertices that includes at least one endpoint of every edge) of the graph. We will use the following classical result by Lenstra.

\begin{theorem}[\cite{DBLP:journals/mor/Lenstra83}]
\label{thm:lenstra}
An integer linear programming (ILP) instance of size $L$ with $p$ variables can be solved using $\mathcal{O}\left(p^{2.5 p+o(p)} \cdot\left(L+\log M_{x}\right) \log \left(M_{x} M_{c}\right)\right)$ arithmetic operations and space polynomial in $L+\log M_{x}$, where $M_{x}$ is an upper bound on the absolute value a variable can take in a solution, and $M_{c}$ is the largest absolute value of a coefficient in the vector $c$.
\end{theorem}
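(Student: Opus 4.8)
The plan is to reduce the optimization problem to a short sequence of feasibility problems and then solve each feasibility instance by the geometry-of-numbers approach underlying Lenstra's algorithm. First I would observe that optimizing a linear objective $c^\top x$ over $\{x \in \mathbb{Z}^p : Ax \le b\}$ can be handled by binary search on the objective value, whose range has absolute value at most $\mathcal{O}(p\, M_x M_c)$; this accounts for the $\log(M_x M_c)$ factor and reduces everything to deciding whether the polyhedron $P = \{x \in \mathbb{R}^p : Ax \le b\}$ contains a point of the lattice $\mathbb{Z}^p$. Standard preprocessing (intersecting $P$ with a box of side $\mathcal{O}(M_x)$, and projecting out the lineality space) lets me assume $P$ is bounded and full-dimensional, or else detect infeasibility immediately.

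The core step is a dichotomy between ``round'' and ``flat'' polytopes. I would compute a L\"owner--John type ellipsoid $E$ with $E \subseteq P \subseteq \gamma E$, where $\gamma$ depends only on $p$, using the ellipsoid method in time polynomial in the encoding length $L + \log M_x$. After an affine map carrying $E$ to a Euclidean ball, I apply the LLL lattice-basis-reduction algorithm to the image of $\mathbb{Z}^p$ to obtain a reduced basis $b_1, \dots, b_p$. If the inner ellipsoid is large enough relative to the shortest reduced vector, it already contains a lattice point and I report feasibility. Otherwise the reduced basis exposes a primitive dual vector $c$ along which $P$ is \emph{thin}: by the flatness theorem of the geometry of numbers, any lattice-point-free convex body in $\mathbb{R}^p$ has lattice width bounded by a quantity $\omega(p)$ depending only on the dimension, and the reduction makes such a width direction explicitly computable.

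Given a thin direction $c$, all integer points of $P$ lie in at most $\omega(p)+1$ parallel hyperplanes $\{x : c^\top x = t\}$ for consecutive integer values $t$. I would branch on each admissible $t$ and recurse on $P \cap \{c^\top x = t\}$, which, after eliminating one variable, is an integer-feasibility instance in dimension $p-1$. The recursion has depth $p$ and branching bounded by $\omega(p)$, so the number of leaves is $\omega(p)^p$; invoking the polynomial-in-$p$ flatness bound $\omega(p) = p^{\mathcal{O}(1)}$ (rather than Lenstra's original exponential estimate) yields the $p^{\mathcal{O}(p)}$ node count, and the ellipsoid plus LLL cost per node, together with the bit-complexity bounds, combine to the stated $\mathcal{O}\!\left(p^{2.5p+o(p)}(L+\log M_x)\log(M_x M_c)\right)$ running time.

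The main obstacle I expect is making the flatness theorem \emph{effective}: the pure existence statement must be upgraded to an algorithm that outputs the thin direction together with a quantitatively good width bound, which is precisely what LLL supplies at the price of a controlled loss in the flatness constant. A secondary but unavoidable burden is the bit-complexity bookkeeping---tracking how the coefficient magnitude $M_c$ and the variable bound $M_x$ propagate through the affine transformations, the ellipsoidal rounding, and the recursive variable elimination, so that every intermediate lattice computation remains polynomial in $L + \log M_x$.
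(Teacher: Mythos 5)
This statement is not proved in the paper at all: it is imported verbatim as a black-box result, cited to Lenstra's 1983 paper (with the stated $p^{2.5p+o(p)}$ bound actually originating in the subsequent refinements of Kannan and of Frank and Tardos). So there is no in-paper proof to compare against; the only question is whether your sketch is a sound reconstruction of the known argument.

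Your outline has the right skeleton --- binary search on the objective to reduce to feasibility (correctly accounting for the $\log(M_xM_c)$ factor), ellipsoidal rounding, basis reduction, the flat/round dichotomy, and recursion on $\omega+1$ parallel lattice hyperplanes --- but the step where you extract the claimed running time does not go through as written. The existential flatness theorem with $\omega(p)=p^{\mathcal{O}(1)}$ is not what the algorithm gets to use: what matters is the width of the direction the \emph{reduction algorithm actually outputs}, and LLL only guarantees a $2^{\Theta(p)}$ approximation factor for the relevant shortest-vector/covering computations. With LLL, the branching factor per level is therefore $2^{\mathcal{O}(p)}$, and over recursion depth $p$ this yields $2^{\mathcal{O}(p^2)}$ nodes --- Lenstra's original bound, not $p^{\mathcal{O}(p)}$. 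Obtaining $p^{2.5p+o(p)}$ requires replacing LLL by Kannan's Korkine--Zolotarev-type reduction (itself computed by a $p^{\mathcal{O}(p)}$-time recursive procedure) so that the computed width direction is within a $\mathrm{poly}(p)$ factor of optimal, together with Kannan's modified recursion. Your closing paragraph correctly identifies effectivity of flatness as the main obstacle, but the resolution you propose (``precisely what LLL supplies at the price of a controlled loss'') is exactly where the exponent degrades; the loss is exponential in $p$, not controlled in the sense you need. Since the paper uses the theorem only as a citation, none of this affects the paper, but as a standalone proof your sketch establishes a weaker running time than the one stated.
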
 

\begin{theorem}
\label{thm:ILP_VC}
    Given a graphical allocation instance with a bounded number of distinct utilities, the problem of finding an EF allocation admits an FPT algorithm parameterized by the Vertex Cover Number of the associated graph $G$.
\end{theorem}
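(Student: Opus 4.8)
The plan is to model the search for an EF orientation as an integer linear program (ILP) with a number of variables bounded by a function of the parameter, and then invoke Lenstra's theorem (\Cref{thm:lenstra}). By \Cref{lem:AllocationToOrientation} it suffices to decide whether an EF \emph{orientation} exists, and by \Cref{lem:EForientation} an (essentially non-wasteful) orientation is EF if and only if every agent $i$ receives utility at least $v_i^{\max}$. So I would first orient every asymmetric edge towards its unique positively-valuing endpoint (this is forced and harmless), discard the edges valued $0$ by both endpoints (they influence no agent), and reduce the task to orienting the remaining edges so that each vertex attains its threshold $v_i^{\max}$.

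Next I would exploit the vertex cover. Compute a minimum vertex cover $C$ with $|C| = \tau$ (in FPT time by standard branching) and let $I = V \setminus C$ be the corresponding independent set. Every edge is then incident to $C$, so the only nontrivially interacting choices are the orientations of the at most $\binom{\tau}{2}$ edges inside $C$ and the orientations of the edges between $I$ and $C$. The key observation is that two vertices of $I$ are interchangeable whenever they have the same neighbourhood in $C$ and the same valuation pairs on the corresponding edges. Since the number of distinct utilities is bounded by some constant $c$ and $|C| = \tau$, this partitions $I$ into at most $(c^2+1)^{\tau}$ \emph{types}. For each type I enumerate its \emph{admissible orientation patterns}: the at most $2^{\tau}$ ways of orienting the incident edges of a type-vertex that keep that $I$-vertex non-wasteful and above its own threshold $v^{\max}$. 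These depend only on the type, so they can be pre-filtered.

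Now I set up the ILP. Introduce a nonnegative integer count variable $x_{t,p}$ for the number of type-$t$ vertices that use admissible pattern $p$, and a binary variable $y_e$ encoding the orientation of each edge $e$ inside $C$. The consistency constraints $\sum_{p} x_{t,p} = n_t$ (with $n_t$ the number of type-$t$ vertices) are linear. The threshold requirement of \Cref{lem:EForientation} for each cover vertex $w \in C$ translates into a single linear inequality: the total value $w$ receives --- a linear function of the $x_{t,p}$ (through the value each pattern sends to $w$) and of the $y_e$ (through the $C$-internal edges oriented to $w$) --- must be at least $v_w^{\max}$. No constraints are needed for the $I$-vertices, since pattern admissibility already guarantees their envy-freeness. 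The number of variables is at most $(c^2+1)^{\tau}\cdot 2^{\tau} + \binom{\tau}{2}$, which depends only on $\tau$ and the constant $c$, while all coefficients and the variable bounds ($x_{t,p}\le n$, $y_e \le 1$) are polynomial in the input size. Hence \Cref{thm:lenstra} decides feasibility in FPT time, and any feasible point reconstructs an EF orientation by assigning the prescribed patterns to arbitrary representatives of each type.

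The main obstacle I anticipate is the correctness of the type/pattern abstraction rather than the ILP machinery. I must argue that (i) restricting to admissible, self-EF patterns loses no solutions, which follows because an $I$-vertex's envy depends only on its received utility versus $v^{\max}$ via \Cref{lem:EForientation}; and (ii) the aggregated linear expression for each $w \in C$ faithfully captures $w$'s utility across all choices, which relies on distinct same-type vertices contributing additively and symmetrically to $w$. Care is also needed to confirm that the produced orientation is genuinely non-wasteful (so that \Cref{lem:EForientation} applies) and to note that it is exactly the boundedness of the number of distinct utilities that keeps the type count --- and hence the ILP dimension --- a function of the parameter alone.
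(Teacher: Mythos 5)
Your proposal is correct and follows essentially the same route as the paper: reduce to EF orientations via \Cref{lem:AllocationToOrientation} and the threshold characterization of \Cref{lem:EForientation}, group the independent-set vertices into boundedly many types, introduce count variables over (type, admissible/good pattern) pairs plus binary variables for the edges inside the cover, and solve with \Cref{thm:lenstra}. If anything, your type definition is slightly more careful than the paper's, since you record the valuation \emph{pairs} on the $I$--$C$ edges (both endpoints' values), which is what makes the per-type contribution to a cover vertex well defined.
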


\begin{proof} We formulate an ILP where the number of variables is bounded by a function of the size of the minimum vertex cover of $G$. We will show that the ILP is feasible if and only if there is an EF orientation in the allocation instance. Then, we invoke \Cref{thm:lenstra} to get a feasible solution of the ILP, if it exists, and hence, get the desired FPT algorithm parameterized by minimum vertex cover number.
 
 Let $B$ be the (bounded) set of distinct utilities. Let $S$ be a minimum Vertex Cover of $G$ and $|S| = k$. We have that $I = V(G) \setminus S$ is an independent set. We say that two vertices in $I$ are of the `same class' $C_i$ if they are incident to the same subset of vertices in $S$. This partitions $I$ into at most $2^k$ classes, corresponding to the subsets of $S$. That is, $I = \{C_1, C_2, \ldots C_{2^k}\}$. Further, for each class $C_i$, we say that two vertices have the `same signature' $\sigma_i$ if they value the subset in $S$ in the same manner. That is, $\{v_1, v_2, \ldots v_s\} \in C_i$ have the same signature if their common neighborhood $\{s_1, s_2, \ldots s_t\} \in S$ is valued by all of them at $\{u_1, u_2, \ldots u_t\}$ such that $u_i \in B$. Since the degree of every vertex in $I$ is at most $k$, this gives us at most $|B|^k$ many signatures for every class. All vertices of the same signature in a class are said to be of the same type. In aggregate, we have at most $2^k \cdot |B|^k$ many types of vertices in $I$.

For each vertex $v$ in a type $T$, there are $2^d$ possible orientations of the edges incident on $v$, where $d$ is the degree of $v$ in $G$. Note that $d \leq k$, so there are at most $2^k$ such orientations. We say that an orientation is `good' for the vertex $v$ if it orients at least one of the highest-valued edges of $v$ towards it. We denote the set of good orientations as $O$. 

Towards formulating the ILP, for every type $T$ and a good orientation $o$, we create the variables $x(T, o)$, which denote the number of vertices in the type $T$ that are oriented according to the orientation $o$. Note that these are $f(k) = (2^k \cdot |B|^k) \cdot 2^k = 4^k \cdot |B|^k$ many variables.

We first describe the constraints to ensure the envy-freeness of the vertices in the independent set $I$. Let $n_T$ be the number of vertices in the type $T$. Any vertex is EF if and only if it gets its highest valued edge oriented towards it (\Cref{cor:EFagent}). Therefore, if the vertex $v$ of type $T$ ends up in a good orientation, it is EF. To ensure this, we add the constraints as described in \Cref{eq:EFIS}. Note that LHS of \Cref{eq:EFIS} equals $n_T$ only when every vertex in the type $T$ is oriented according to some good orientation $o$. Indeed, if any vertex fails to end up in a good orientation, then it is not counted in the sum and hence RHS is strictly less than $n_T$, which then violates the constraint.

Now consider a vertex $i$ in $S$.
Let $u(i, T, o)$ denote the utility that an agent $i \in S$ gets when a vertex in type $T$ is oriented according to the orientation $o$. Note that for a fixed orientation, $u(i, T, o)$ is a constant. If $x(T, o)$ many vertices in type $T$ are oriented according to $o$, then the utility that agent $i$ derives from the edge items across $S$ and $I$ from type $T$ is precisely $x(T,o) \cdot u(i, T, o)$.  To capture the utility that $i$ gets from edges in $E(S)$, we can do a brute-force search on which edges are allocated to $i$ (since there are at most ${k \choose 2}$ edges in $E(S)$). To that end, we create binary variables $x_{ie}$ which take value $1$ if the edge $e \in E(S)$ is allocated to $i$, otherwise $0$. These are at most $g(k) = k \cdot k^2$ many variables. And, the utility that $i$ derives from $E(S)$ is precisely $\sum_{e \in E(S)}u_i(e)x_{ie}$. \Cref{eq:eq3} ensures that every edge in $S$ is allocated to at most $1$ agent in $S$, while \Cref{eq:eq4} ensures that every edge in $E(S)$ is allocated. Finally, for $i$ to be EF, it must get at least $v_i^{max}$ utility under any allocation.
This is captured by the constraints in \Cref{eq:EFS}.

\begin{equation}
\footnotesize
\label{eq:EFIS}
    \sum_{o \in O} x(T, o) = n_T ~~~\forall~ T
\end{equation}

\begin{equation}
\footnotesize
\label{eq:eq3}
  \sum_{i \in S} x_{ie} = 1~~~\forall~ e \in E(S)
\end{equation}

\begin{equation}
\footnotesize
\label{eq:eq4}
  \sum_{i \in S} \sum_{e \in E(S)} x_{ie} = |E(S)|
\end{equation}


\begin{equation}
\footnotesize
\label{eq:EFS}
       \sum_{e \in E(S)} u_i(e)x_{ie} + \sum_{T, o} x(T, o) \cdot u(i, T, o) \geq v_i^{max} ~~~\forall~ i \in S
\end{equation}

\begin{equation}
\footnotesize
\label{eq:eq1}
    x(T, o) \geq 0 ~~~\forall~T ~\&~ o \in O
\end{equation}

\begin{equation}
\footnotesize
    x_{ie} \in \{0, 1\} ~~~\forall~i \in S ~\& ~e \in E(S)
\end{equation}

In aggregate, the number of variables created is $f(k) + g(k)$. We now argue the correctness of the ILP. Let $O$ be the orientation that corresponds to the values that $x(T,o)$ takes in some feasible solution of the ILP. For every vertex in the independent set $I$, \Cref{eq:EFIS} ensures that it ends up in a good orientation, and therefore gets one of its highest valued edges oriented towards itself under $O$. This ensures the envy-freeness of vertices in $I$. The envy-freeness of vertices in $S$ is ensured in \Cref{eq:EFS} via a brute-force search of an orientation that gives every vertex in $S$ its highest valued edge. Therefore, a feasible solution to the ILP corresponds to an envy-free allocation of the original instance.

Conversely, suppose there is an EF allocation $\Phi$ in the original instance. Then, by \Cref{lem:AllocationToOrientation}, there is an EF orientation $O$. Let $O_I$ and $O_S$ be the restrictions of $O$ for vertices in $I$ and $S$ respectively. Since $O$ is EF, we have that both $O_I$ and $O_S$ are good orientations. This implies that there exist orientations under which every vertex ends up being in a good orientation. Hence, the constraints \ref{eq:EFIS} and \ref{eq:EFS}, which loop over all the good orientations, are satisfied when the variables $x(T, o)$ and $x_{ie}$ correspond to the orientations $O_I$ and $O_S$ respectively. This implies that the ILP is feasible and this settles our claim.
\end{proof}

\section{EFX and Welfare-Maximization}

In this section, we discuss the price of EFX on graphical instances. Every graph may not admit an EFX orientation but does admit an EFX allocation, so it must be the case that some welfare is lost in the process of achieving EFX. We quantify this loss with respect to Utilitarian welfare.

\begin{theorem}
\label{thm:EFXutil}
    For $\{0, 1\}$-graphical instances, a non-wasteful EFX allocation always exists and can be found in polynomial time. Therefore, the Price of EFX with respect to Utilitarian welfare is $1$.
\end{theorem}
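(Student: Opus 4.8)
The plan is to establish two things: first, that a non-wasteful EFX allocation always exists for $\{0,1\}$-graphical instances and can be computed efficiently; second, that this implies a Price of EFX equal to $1$ with respect to utilitarian welfare. The second part will follow almost immediately from the first: since the instance is binary, the maximum possible utilitarian welfare is bounded above by $m$ (each item contributes at most $1$ to the sum), and a non-wasteful allocation assigns every positively-valued edge to an agent who values it at $1$. If we can show that the constructed EFX allocation assigns \emph{every} edge that is valued by someone to an agent who values it, then it achieves the maximum utilitarian welfare, forcing the ratio in $PoF_{UM}$ to be exactly $1$. So the crux is the existence and efficient construction of a non-wasteful EFX allocation.

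\medskip

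For the existence/construction part, I would mirror the structure of the proof of \Cref{thm:EFexists01}, reducing to an orientation problem on the symmetric-$1$ subgraph. First I would orient each asymmetric edge (valued $1$ by exactly one endpoint) toward the agent valuing it, and discard edges valued $0$ by both endpoints (orienting them arbitrarily, which wastes nothing of value). This isolates the connected components $H_1, \dots, H_k$ consisting solely of symmetric edges valued $1$ by both endpoints, and since components are disjoint in the items their agents value, it suffices to produce a non-wasteful EFX orientation within each $H_i$ independently. The key observation is that for $\{0,1\}$ valuations, EFX is \emph{easier} than EF to satisfy: an agent $v$ with $u_v(\Phi_v) \geq 1$ can never EFX-envy anyone, because removing any single item from the envied bundle drops its perceived value by $1$, and the agent already matches or exceeds that residual. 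So within each $H_i$ I only need to ensure that every agent receives at least one incident symmetric edge, i.e. I need an orientation of $H_i$ in which every vertex has in-degree at least $1$.

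\medskip

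The existence of such an orientation is exactly where the structural casework enters, and this is the step I expect to be the main (though not severe) obstacle. A graph admits an orientation with all in-degrees at least $1$ if and only if every connected component contains a cycle (equivalently, is not a tree): in a component with a cycle, orient the cycle consistently to give each cycle-vertex in-degree $1$, then root a spanning structure and direct the remaining edges outward-from-cycle so each non-cycle vertex receives the edge from its parent; in a tree, some vertex must have in-degree $0$ by the edge-count identity $|E| = |V| - 1$. This is precisely the dichotomy already used in \Cref{thm:EFexists01}, cases \ref{item:thm2case1} and \ref{item:case2}. The subtle point I must handle is the tree case: unlike in \Cref{thm:EFexists01}, where a ``special vertex'' from an asymmetric edge could rescue a tree component, here I should argue that any such salvaging agent already has utility $\geq 1$ and hence is EFX-safe, so I root the tree at such an agent and give every other vertex its parent edge. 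If a tree component has no agent already satisfied and no cycle, then one agent necessarily receives nothing; but I would observe that for EFX (unlike EF) this is still fine, since an agent receiving the empty bundle can only be envious if some other bundle has value exceeding $0$ after removing any item, and I would check that the agent values at most the single edge shared with each neighbor, so its envy toward a neighbor holding only that one shared edge vanishes under EFX. I would verify carefully that in the worst tree case every bundle an envious agent sees consists, from that agent's perspective, of exactly one valued item, making EFX automatically satisfied. Assembling the per-component orientations yields a global non-wasteful EFX allocation, and the efficiency is immediate since each step (orienting asymmetric edges, decomposing into components, cycle-detection, spanning-tree construction) is polynomial-time.
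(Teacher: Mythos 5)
Your proposal follows essentially the same route as the paper's proof: orient each asymmetric edge toward its unique admirer, reduce to the connected components of symmetric $1$-valued edges, and split into the cycle-containing case (every vertex gets in-degree at least one and is hence envy-free by \Cref{cor:EFagent}) and the tree case (one vertex is left empty-handed, but EFX survives because each envied neighbour holds a single item). Your reformulation of the middle step as ``find an orientation with all in-degrees at least one'' is a clean packaging of \Cref{lem:EForientation}, and your welfare argument matches the paper's observation that a non-wasteful allocation is automatically utilitarian-optimal for binary valuations.

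Two points deserve care. First, your justification that an agent with utility at least $1$ cannot EFX-envy anyone (``removing any single item drops the perceived value by $1$'') is not quite the right reason, since removing a $0$-valued item drops nothing; the correct reason is that under an orientation any other agent's bundle contains at most one edge you value (the unique shared edge), so its total value to you is at most $1$ and there is no envy at all. Second, and more substantively, the paper's definition of EFX requires that envy be eliminated by removing \emph{any} item from the envied bundle, including items the envious agent values at $0$. Your plan to orient the doubly-zero-valued edges ``arbitrarily'' can therefore fail: if such an edge lands on a child of the empty-handed root of a tree component, that child's bundle contains two items, and removing the zero-valued one leaves the shared edge in place, so the root still EFX-envies. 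The paper avoids this by explicitly assigning the $0$-valued-by-both edges to non-envied agents at the end of the algorithm. With that fix, your argument goes through.
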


\begin{proof}
    Consider an instance $\mathcal{I}$ of $\{0,1\}$-graphical allocations. For all the asymmetric edges $e =(ij)$, we orient them towards the incident agent who values $e$ at $1$, say $i$. We call $i$ a special vertex, since $i$ is now envy-free in any completion of this partial allocation and is also not envied by anyone else. Once we orient all the asymmetric edges, we remove them from the graph.  The edges that are valued at $0$ by both end-points can be allocated to the non-envied agents arbitrarily at the end of the algorithm, so for now, we remove them from the graph and consider a collection of connected subgraphs $H = \{H_1, H_2, \ldots H_k\}$ such that all edges in $H$ are symmetric and valued at $1$ by both the end-points. For each $H_i \in H$, we consider the following cases:

\begin{enumerate}
\item \label{item:case1EFX} $H_i$ is a Tree. Suppose there is a special agent $i$, then proceed as in the Case \ref{item:thm2case1} of \Cref{thm:EFexists01} and hence arrive at an EF (hence, EFX) orientation on $H_i$.
Suppose there is no special vertex in the tree $H_i$. Then there is no complete EF allocation. To find an EFX allocation, we root $H_i$ on any vertex, say the least degree vertex $i$, and construct an orientation such that every vertex gets an edge item from its parent. Note that $i$ leaves empty-handed and is envious of its neighbors in $H_i$. Since every envied agent (precisely, the children of $i$ in the tree $H_i$) gets exactly one edge item (precisely the edge from $i$), the allocation is EFX.

\item \label{item:case2EFX} $H_i$ contains a cycle, say $C = \{v_1, v_2, \ldots v_c, v_1\}$. This case is the same as Case $\ref{item:case2}$ in the proof of \Cref{thm:EFexists01} and therefore a non-wasteful EF (hence EFX allocation exists) such that every agent receives a utility of at least $1$.
\end{enumerate}

Therefore, we get a non-wasteful EFX allocation $\Phi$. For $\{0, 1\}$-valuations, a non-wasteful allocation is also utilitarian optimal, and hence $\Phi$ is also utilitarian optimal. Therefore, the price of EFX is $1$ in this case.
\end{proof}

\begin{theorem}
\label{thm:pofinfinity}
    The price of EFX with respect to Utilitarian welfare is $\infty$ even for $\{0,1,d\}$-graphical valuations.
\end{theorem}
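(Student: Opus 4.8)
The plan is to exhibit a single parametrized family of $\{0,1,d\}$-graphical instances on which every EFX allocation is forced to waste almost all the high-value ($d$-valued) items, while the unconstrained (utilitarian-optimal) allocation collects all of them. Since the price of fairness is a supremum over instances, it suffices to construct, for each large $d$, one instance whose welfare ratio grows without bound as $d \to \infty$; this drives $PoF_{UM}$ to $\infty$. The key design principle is the tension already visible in \Cref{thm:EFHard}: a $d$-valued item is so valuable that whoever does \emph{not} receive it becomes envious, and EFX (unlike EF) cannot always repair this by giving the item away, because the envied bundle may contain a single item that EFX forbids us from discounting.

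First I would build a small gadget realizing the obstruction. The cleanest candidate is a star-like or path-like configuration of $d$-edges sharing a common structure so that the number of agents competing for the $d$-items exceeds the number of $d$-items available; by pigeonholing (as in the reverse direction of \Cref{thm:EFHard}), at least one agent cannot receive a $d$-item. The crucial point for EFX is to arrange that this losing agent's only route to becoming EFX-satisfied is for the envied $d$-item to be \emph{removed and wasted}, i.e.\ assigned to an agent valuing it $0$ (or left contributing nothing to welfare), rather than oriented toward someone who values it. Concretely, I would use edges where one endpoint values the edge at $d$ and the other at $0$, so that orienting the $d$-edge toward the $d$-agent makes the $0$-agent's neighbour hold a singleton $d$-bundle against which EFX is violated for anyone who values it at $d$; the only EFX-safe option is to push the $d$-item to the $0$-valuing side, wasting it. Repeating this gadget $t$ times and sharing a single ``contested'' agent forces $t$ of the $d$-items to be wasted in any EFX allocation.

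Once the gadget is fixed, the two halves of the argument are routine. For the utilitarian optimum I would simply orient every $d$-edge toward its $d$-valuing endpoint, giving welfare $\Theta(t d)$ (plus lower-order contributions from the unit edges). For the EFX side I would argue that any EFX allocation must waste all but a bounded number of the $d$-items: whichever agent is forced (by pigeonhole) to miss a $d$-item would, upon seeing any neighbour holding a singleton $d$-valued bundle, violate EFX, so that neighbour's $d$-item cannot be present—forcing it to the $0$-valuing endpoint. This caps the EFX welfare from the $d$-items at $O(d)$ while the unit edges contribute only $O(\text{poly})$ independent of $d$, so the ratio is $\Omega(t)$, or already $\Omega(d/\text{const})$ for a single shared gadget, which tends to $\infty$.

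The main obstacle, and where I would spend the most care, is the EFX bookkeeping on the ``losing'' agent: EFX is weaker than EF, so merely showing envy is \emph{not} enough—I must verify that the envy genuinely violates EFX, which requires the envied bundle to be a \emph{singleton} (or to consist only of items each of which the envious agent values at the full bundle value), so that dropping ``any one item'' still leaves envy. The construction must therefore guarantee that the contested $d$-item sits essentially \emph{alone} in its holder's bundle from the losing agent's perspective; I would ensure this by making the losing agent value none of the holder's other incident edges (automatic in a graphical instance, since distinct agents share at most one edge). Handling the boundary case where a holder legitimately receives a $d$-item without triggering any EFX violation—and confirming that at most a constant number of $d$-items can survive this way—is the delicate step that ties the welfare bound together.
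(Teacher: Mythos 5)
There is a genuine gap, and it sits at the heart of your mechanism. You repeatedly assert that a neighbour ``holding a singleton $d$-valued bundle'' triggers an EFX violation for anyone who values that item at $d$. This is backwards: a singleton bundle can \emph{never} violate EFX, because removing its unique item leaves the empty bundle, which every agent values at $0$. In a graphical instance an agent $w$ values at most one item in any other bundle $\Phi_u$ (namely the edge $(w,u)$), so the pair $(w,u)$ violates EFX if and only if $(w,u)\in\Phi_u$, $|\Phi_u|\ge 2$, and $u_w((w,u))>u_w(\Phi_w)$. Consequently your gadget of edges valued $d$ by one endpoint and $0$ by the other does not force wasting: orienting such an edge toward its $d$-valuing endpoint as a singleton is perfectly EFX-safe (the $0$-valuing endpoint feels no envy, and nobody else values the edge at all), so the utilitarian optimum on your family is itself EFX and the ratio stays bounded. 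The ``force all $d$-items to be wasted'' plan therefore collapses, and no amount of bookkeeping on the losing agent repairs it.

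The fix is that wasting is not needed at all; the loss should come from \emph{misallocation relative to valuations}, not from $0$-valued assignments. The paper's construction is a star with root $r$ and $d$ leaves, where $r$ values each incident edge at $d$ and each leaf values its own edge at $1$. The utilitarian optimum gives everything to $r$ for welfare $d^2$. Under any EFX allocation $r$ holds at most one edge: if $|\Phi_r|\ge 2$ and $(r,v)\in\Phi_r$, then removing some \emph{other} item from $\Phi_r$ still leaves $(r,v)$, so leaf $v$ (who gets nothing it values) violates EFX. Hence EFX welfare is at most $d+(d-1)=2d-1$, and the ratio $d^2/(2d-1)$ grows without bound. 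Your instinct that the EFX bookkeeping on the envied bundle is the delicate step was right; you just need bundles of size at least two, not singletons, to generate the violation.
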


\begin{proof}
    We construct an instance where the price of fairness is a function of the highest degree of a vertex in the graph. Consider a star graph $G$ rooted at the vertex $r$ which is incident to $d$ many leaf vertices. The root vertex $r$ values each of the $d$ incident edges at $d$. All the leaf vertices value their incident edge at $1$. A utilitarian welfare maximizing allocation gives all the edges to $r$, generating a welfare of $d^2$. Clearly, this allocation is not EFX since the envied agent $r$ has multiple items and every leaf agent violates EFX. Under any EFX allocation, $r$ can not receive more than $1$ item, otherwise, the corresponding leaf vertex whose incident edge is allocated to $r$, violates EFX. Therefore, the maximum welfare under an EFX allocation is $d+(d-1)$, where one $d$-valued edge is allocated to $r$ and the rest all ($d-1$) edges are allocated to their corresponding leaf vertices, valued at $1$ by each of them. Therefore,
    $PoF_{UM} = \frac{d^2}{d+(d-1)} > \frac{d^2}{2d} \approx d. $
    This implies that welfare loss can be as high as possible, and hence PoF is $\infty$. 
\end{proof}


\begin{theorem} 
\label{thm:utilhard} Given an instance of graphical valuations, deciding the existence of a utilitarian welfare-maximizing and EFX allocation (UM+EFX) is NP-Hard.
\end{theorem}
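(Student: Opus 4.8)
\section*{Proof proposal for \Cref{thm:utilhard}}

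The plan is to reduce from the problem of deciding whether a graphical instance admits an EFX \emph{orientation}, which Christodoulou et al.~\cite{10.1145/3580507.3597764} showed to be NP-hard. The reduction will be essentially the identity map, so the entire argument rests on a single structural observation: on graphical instances where every item is valued \emph{strictly positively and symmetrically} by both of its endpoints, the set of UM allocations coincides exactly with the set of orientations. Granting this, asking for a UM allocation that is also EFX is literally asking for an EFX orientation, and the NP-hardness transfers immediately.

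First I would prove the structural lemma. For a symmetric positive instance, write $w_e := v_i(e) = v_j(e) > 0$ for each edge $e = (ij)$. Since only the two endpoints derive positive value from $e$, any allocation earns at most $w_e$ from $e$, with equality if and only if $e$ is assigned to one of its endpoints; hence the utilitarian welfare of any allocation is at most $\sum_{e} w_e$. Every orientation attains this bound, so every orientation is UM; conversely, any UM allocation must assign each edge to an endpoint (otherwise it forfeits $w_e > 0$ and is suboptimal), and is therefore an orientation. Thus the UM allocations are \emph{precisely} the orientations on such instances, and there is no freedom beyond choosing, per edge, which endpoint receives it.

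Combining this with the equivalence, deciding the existence of a UM+EFX allocation on symmetric positive graphical instances is the same as deciding the existence of an EFX orientation, which is NP-hard; since these instances form a special case of general graphical valuations, the stated hardness follows. The accompanying remark---that finding a welfare-maximizing allocation \emph{within} the EFX allocations is also hard---then comes for free: any EFX allocation has welfare at most $\sum_e w_e$, with equality only for an EFX orientation, so an algorithm returning the maximum EFX welfare would reveal whether it equals the trivially computable value $\sum_e w_e$, thereby deciding EFX orientation existence.

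The main obstacle is ensuring that the NP-hard instances inherited from the EFX-orientation reduction are genuinely symmetric and strictly positive, so that the clean identity UM $=$ orientations holds. If the construction of Christodoulou et al. uses asymmetric weights or zero-valued edges, then UM allocations become a strict (and potentially singleton) subset of the orientations, which would collapse the reduction into a polynomial-time check. In that case I would need to symmetrize their gadgets---replacing each edge's pair of weights by a common positive value---and verify that this transformation preserves EFX-orientability of the underlying instance; establishing that this rewiring does not change the yes/no answer is the delicate part of the argument.
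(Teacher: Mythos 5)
Your route is genuinely different from the paper's: you try to piggyback on the known NP-hardness of deciding EFX \emph{orientation} existence, via the (correct and nicely argued) structural observation that on instances where every edge is valued symmetrically and strictly positively by its two endpoints, the UM allocations are exactly the orientations. The paper instead gives a self-contained reduction from \textsc{Multi-Colored Independent Set}: it attaches to each colour class $V_i$ a path gadget $w_i^1\text{--}w_i^2\text{--}w_i^3\text{--}w_i^4$ whose (deliberately asymmetric, partly zero-valued) weights force every UM allocation to hand $w_i^3$ two $d$-valued edges, which makes $w_i^2$ EFX-violated unless it also receives a $d$-edge into $V_i$, which in turn forces one vertex per class to collect all $d$ of its unit edges --- i.e.\ an independent set. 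The paper's construction thus never needs any claim about which instances witness the hardness of EFX orientation.

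The gap in your proposal is exactly the one you flag in your final paragraph, and it is not a cosmetic one: your entire argument is conditional on the EFX-orientation hardness holding for instances that are simultaneously symmetric \emph{and} strictly positive on every edge, and you neither verify that the construction of Christodoulou et al.\ has this form nor carry out the proposed repair. If any gadget edge is valued at $0$ by one endpoint (or asymmetrically), UM pins that edge to one side, so ``UM $+$ EFX exists'' is no longer equivalent to ``an EFX orientation exists,'' and the identity reduction breaks. The fallback of ``symmetrizing the gadgets and verifying that this preserves EFX-orientability'' is itself the hard part: replacing an edge's two weights by a common positive value changes which agents can envy whom and can flip the yes/no answer, and a small-$\epsilon$ perturbation of zero values does not obviously preserve EFX either. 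As written, then, the proof does not close; to make it work you would either have to inspect and restate the source reduction with the required properties, or (as the paper does) build a direct reduction whose gadgets you control.
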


\begin{proof}
    We present a reduction from \textsc{Multi-Colored Independent Set (MCIS)}, where given a regular graph $G= (V_1 \uplus \cdots \uplus V_k, E)$ with degree $d$, the problem is to decide if there exists a subset $S \subseteq V(G)$ such that $G[S]$ is an independent set and $|V_i \cap S| = 1$ for all $i \in [k]$. We construct the graphical instance as follows.
    All vertices in $V(G)$ correspond to agents and all edges in $E(G)$ to items. Every agent $v \in V(G)$ values its incident edges at $1$. That is, all edges in $G$ are symmetric with a weight of $1$. For every vertex partition $V_i$, we add a path of three edges and four vertex-agents $\{w_i^1, w_i^2, w_i^3, w_i^4\}$ such that $w_2^i$ is adjacent to all the vertices in $V_i$. All edges from $w_i^2$ to $V_i$ are valued symmetrically at $d$ by both end-points. The edge ($w_i^1, w_i^2$) is valued at $0$ by $w_i^1$ and at $1$ by $w_i^2$. The edge ($w_i^2, w_i^3$) is valued at $d$ by both $w_i^2$ and $w_i^3$. Finally, the edge ($w_i^3, w_i^4$) is valued at $d$ by $w_i^3$ and at $0$ by $w_i^4$. This completes the construction. A schematic of this construction is shown in \Cref{fig:EFX+UMHard}. We now argue the equivalence.

    \begin{figure}
    \centering
    \includegraphics[width=0.7\linewidth]{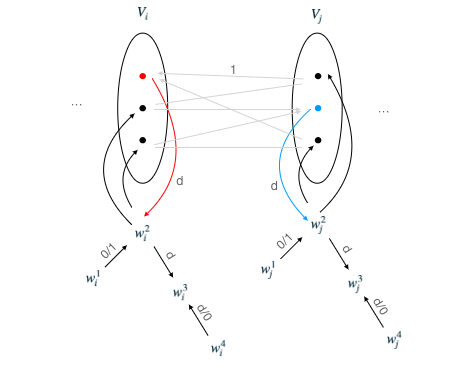}
    \caption{A schematic of reduced instance in the proof of \Cref{thm:utilhard}.}
    \label{fig:EFX+UMHard}
\end{figure}

    \paragraph{Forward Direction.}  Suppose~\textsc{MCIS} is a Yes-instance and there is an independent set $S = \{s_1, s_2, \ldots s_k\} \subseteq V(G)$ such that $|V_i \cap S| = 1$. Then, we do the following orientation of $E(G)$ to get an allocation that is welfare-maximizing and EFX.
    
    \begin{itemize}
    \item $\{(s_i, w_i^2)\}$ are oriented towards $w_i ~\forall~ i \in [k]$.
    \item $\{(v, w_i): v \in V_i \setminus \{s_i\}\}$ are oriented towards $v ~\forall~ i \in [k]$.
    \item $\{(s_i, v) : v \in N(s_i) \setminus \{w_i\} \}$ are oriented towards $s_i~\forall~ i \in [k]$.
    \item $\{w_i^1, w_i^2\}$ are oriented towards $w_i^2~ \forall~  i \in [k]$.
    \item $\{w_i^2, w_i^3\} ~\&~ \{w_i^3, w_i^4\}$ are oriented towards $w_i^3~ \forall~  i \in [k]$.
    \end{itemize} 

Let $\Phi$ be the allocation corresponding to the above orientation. Then, by construction, every edge is allocated to an agent who values it the most. Therefore, $\Phi$ is a (utilitarian) welfare-maximizing allocation. We now argue that $\Phi$ also satisfies EFX. The agents $w_i^1$ and $w_i^4$ do not value any item, so even though they are empty-handed under $\Phi$, they do not envy any agent. All the agents in $V_i$ except $s_i$  get a utility of $d$ each and they value every other bundle at most $d$, hence are envy-free. Likewise, $w_i^2$ is envy-free as it gets a utility of $d+1$ and values every other bundle no more than $d$. Also, $w_i^3$ gets all the edges it values, so there is no envy on its part. Lastly, each $s_i$ gets $d$ of its incident edges valued at $1$ each, deriving a value of $d$, and hence they are envy-free. This implies that $\Phi$ is EF and hence, EFX.

\paragraph{Reverse Direction.} Suppose there is a welfare-maximizing allocation $\Phi$ which also satisfies EFX. Then because $\Phi$ maximizes welfare, it must satisfy the following partial allocation:
$w_i^3$ must receive both its incident edges $\{w_i^2, w_i^3\} ~\&~ \{w_i^3, w_i^4\}$ as it values them highly at $d$, and $\{w_i^1, w_i^2\}$ must be allocated to $w_i^2$ as a utilitarian welfare maximizing allocation is also non-wasteful. This forces $w_i^2$ to be envious of $w_i^3$ even after one item is removed from the envied bundle. Therefore, $w_i^2$ must receive at least one item that it values at $d$ incident to the partition $V_i$. This in turn forces at least one vertex from $V_i$ to violate EFX with respect to $w_i^2$, hence it must receive at least $d$ utility from the remaining items. This is feasible only when it is allocated all its $d$ incident edges. Since this is true for at least one vertex in all $V_i$ such that $i \in [k]$, it must be the case that all these $k$ vertices form an independent set in $G$. This implies that MCIS is a yes-instance. This concludes the argument. 
\end{proof}

We now present a polynomial-time reduction from deciding the existence of a welfare-maximizing and EFX allocation (UM+EFX) to finding a welfare-maximizing allocation within the set of EFX allocations (UM/EFX). Let $w^\star$ be the maximum utilitarian
welfare ($w^\star$ can be computed in linear time by giving
each item to an agent who values it the most). Now suppose the latter problem can be solved in polynomial time. Then, let
$w$ be the maximum welfare within EFX allocations.  If $w=w^\star$, we have a “yes” instance of UM+EFX; else if $w \neq w^\star$, we have a “no”
instance. Therefore, we get the following result.

\begin{corollary}
    Given an instance of graphical valuations, deciding the existence of an EFX allocations with utilitarian welfare at least $w$ (UM/EFX) is NP-Hard.
\end{corollary}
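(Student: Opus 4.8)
The plan is to establish NP-hardness of the threshold decision problem UM/EFX by a polynomial-time many-one reduction \emph{from} UM+EFX, whose hardness is given by \Cref{thm:utilhard}. The key observation I would exploit is that the maximum utilitarian welfare $w^\star$ over all allocations of a graphical instance is an exact and efficiently computable upper bound on the welfare achievable by any EFX allocation. Since the set of EFX allocations is a subset of all allocations, no EFX allocation can have welfare exceeding $w^\star$; and $w^\star$ itself is obtained in linear time by assigning each edge-item to an incident agent who values it most (as noted in the preceding discussion, each item is valued by exactly two agents, so the higher-valuing endpoint is the welfare-optimal recipient).

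Given an instance $I$ of UM+EFX, I would first compute $w^\star$ in linear time and then output the UM/EFX instance $(I, w^\star)$, namely the question of whether $I$ admits an EFX allocation of utilitarian welfare at least $w^\star$. Correctness follows directly from the upper-bound observation: any EFX allocation of welfare at least $w^\star$ must in fact have welfare exactly $w^\star$, hence be welfare-maximizing. Consequently $(I, w^\star)$ is a yes-instance of UM/EFX precisely when $I$ admits an allocation that is simultaneously welfare-maximizing and EFX, i.e. exactly when $I$ is a yes-instance of UM+EFX. As the reduction runs in polynomial (indeed linear) time and preserves yes/no answers, the NP-hardness of UM+EFX transfers to UM/EFX.

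I do not anticipate a substantial obstacle here; the single point demanding care is the \emph{direction} of the reduction. The superficially natural move is to reduce UM/EFX to UM+EFX, but the hardness in hand is for UM+EFX, so the reduction must go the other way, using the fact that setting the threshold to $w = w^\star$ collapses the ``at least $w$'' requirement onto the welfare-maximization constraint. Once this calibration of the threshold is in place, the equivalence is immediate, and it also formalizes the informal remark preceding the statement: a polynomial-time solver for the welfare-maximizing EFX problem would decide UM+EFX merely by comparing its returned optimum against $w^\star$.
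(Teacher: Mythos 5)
Your proposal is correct and follows essentially the same route as the paper: both compute the overall welfare optimum $w^\star$ in linear time and observe that asking for an EFX allocation of welfare at least $w^\star$ is exactly the UM+EFX question, whose hardness is given by \Cref{thm:utilhard}. Your phrasing as an explicit many-one reduction with threshold $w = w^\star$ is a slightly cleaner formalization of the paper's Turing-style argument, but the underlying idea is identical.
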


We now discuss the complexity of egalitarian welfare maximizing allocation withing the set of EFX allocations and the loss in the egalitarian welfare due to the EFX constraint.
Recall that the egalitarian welfare of an allocation $\Phi$ is defined as the minimum utility of any agent
under $\Phi$. In the following result, we exhibit the hardness of finding an egalitarian maximizing allocation within the set of EFX allocations (EM/EFX).

\begin{theorem}
\label{thm:egalhard} Given an instance of graphical valuations, deciding the existence of an EFX allocation with egalitarian welfare at least $d$ is NP-Hard.
\end{theorem}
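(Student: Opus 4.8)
The plan is to reduce from \textsc{Multi-Colored Independent Set (MCIS)} on a regular graph, reusing essentially the same construction as in \Cref{thm:utilhard} but re-purposing the gadget so that the \emph{egalitarian} guarantee of $d$ forces the independent-set structure. The key difficulty here, compared to the previous theorem, is that egalitarian welfare only cares about the \emph{minimum} utility across all agents, so every agent must end up with utility at least $d$; in particular any agent who intrinsically cannot obtain $d$ (e.g.\ a degree-one sink that values its only edge at less than $d$, or a leaf that values nothing) would immediately cap the egalitarian welfare below $d$. So first I would modify the per-partition gadget so that \emph{every} vertex-agent has the potential to reach utility exactly $d$ while still being forced, under the EFX constraint, into a choice that encodes membership in the independent set.

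Concretely, I would take the original graph $G$ (regular of degree $d$), make every original edge symmetric of value $1$, so each original vertex can reach utility $d$ by collecting all $d$ of its incident edges. For each color class $V_i$ I would attach a selector-agent $w_i$ adjacent to all vertices of $V_i$ via symmetric edges of value $d$, together with enough auxiliary edges so that $w_i$ (and any helper agents) can also reach utility $d$. The pigeonhole argument from \Cref{thm:EFHard} and \Cref{thm:utilhard} then applies: there are only $|V_i|-1$ high-value ($d$) edges incident to $w_i$ available to the $|V_i|$ candidate vertices, so in any allocation meeting the egalitarian target at least one vertex $s_i\in V_i$ fails to receive a $d$-valued edge and must instead reach $d$ \emph{from its original incident edges}, i.e.\ collect all $d$ of them. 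Exactly as before, two distinct selected vertices $s_i,s_j$ cannot both grab all their incident $\{0,1\}$-edges if they are adjacent in $G$, because the shared edge can go to only one of them and the loser then falls below $d$; hence $\{s_1,\dots,s_k\}$ is forced to be independent.

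For the forward direction I would exhibit, given an independent set $S=\{s_1,\dots,s_k\}$ with $|V_i\cap S|=1$, an explicit orientation: orient $(s_i,w_i)$ toward $w_i$, orient the remaining $V_i$-to-$w_i$ edges toward the $V_i$ vertices, orient each $s_i$'s original incident edges toward $s_i$, and fix the auxiliary edges so that every helper agent reaches $d$; then verify that every agent attains utility at least $d$ (so egalitarian welfare is $\ge d$) and that the allocation is EFX (each envied bundle is either a single $d$-item or a collection of single-valued items removable to kill envy). The main obstacle I anticipate is the EFX bookkeeping in the reverse direction: I must ensure that achieving egalitarian welfare $d$ together with EFX genuinely \emph{forces} the pigeonholed agent to collect \emph{all} $d$ of its original edges rather than mixing a $d$-item from somewhere else with partial single-value edges. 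The clean way to secure this is to arrange, as in \Cref{thm:utilhard}, a forced high-value chain incident to $w_i$ so that under EFX the selector $w_i$ is compelled to keep at least one $d$-valued edge toward $V_i$, which removes that edge from the pool and reinstates the pigeonhole count; I would make that chain argument precise to rule out degenerate allocations, and then conclude the equivalence with MCIS, establishing NP-hardness.
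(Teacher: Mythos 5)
Your reduction is essentially the paper's: the paper also reduces from \textsc{MCIS}, but it reuses the simpler construction of \Cref{thm:EFHard} (a single selector agent $w_i$ joined to all of $V_i$ by symmetric $d$-valued edges, original edges symmetric of value $1$) rather than the path gadget of \Cref{thm:utilhard}. Your forward direction and your pigeonhole argument in the reverse direction coincide with the paper's, and your opening observation --- that any agent who cannot possibly reach utility $d$ would cap the egalitarian welfare --- is exactly why the \Cref{thm:utilhard} gadget must be stripped down to the \Cref{thm:EFHard} one.

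The one place you go astray is the final paragraph. The ``EFX bookkeeping'' you worry about is a non-issue here: the reverse direction never needs the EFX hypothesis at all. The egalitarian threshold alone forces $w_i$ to keep one of its $d$-valued edges (these are the only items $w_i$ values), which reinstates the pigeonhole; and the pigeonholed vertex $s_i$ values exactly one $d$-item, namely $(s_i,w_i)$, which by assumption it did not receive, so its \emph{only} route to utility $d$ is to collect all $d$ of its unit-valued original edges --- there is no ``$d$-item from somewhere else'' to mix in. Consequently the ``forced high-value chain incident to $w_i$'' you propose to add is unnecessary, and if you imported it verbatim from \Cref{thm:utilhard} it would break the reduction, since the helper agents $w_i^1$ and $w_i^4$ there value nothing and would pin the egalitarian welfare of every allocation at $0$. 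Drop the chain, keep the bare selector gadget, and your argument is exactly the paper's proof.
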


\begin{proof} We show that given a welfare threshold $d$, deciding the existence of an EFX allocation with egalitarian welfare at least $d$ is NP-hard. To that end, we exhibit a reduction from \textsc{Multi-Colored Independent Set (MCIS)}. The construction is similar as in the proof of \Cref{thm:EFHard}.

\paragraph{Forward Direction.}  Suppose~\textsc{MCIS} is a Yes-instance and there is an independent set $S = \{s_1, s_2, \ldots s_k\} \subseteq V(G)$ such that $G[S]$ is an independent set and $|V_i \cap S| = 1$. Then, we do the following orientation of $E(G)$ to get an EF allocation. 
    \begin{itemize}
    \item $\{(s_i, w_i): i \in [k]\}$ are oriented towards $w_i$.
    \item $\{(v, w_i): v \in V_i \setminus \{s_i\}\}$ are oriented towards $v$.
    \item $\{(s_i, v) : v \in N(s_i) \setminus \{w_i\} \}$ are oriented towards $s_i$.
    \item All the remaining edges are oriented arbitrarily.
    \end{itemize} 

Let $\Phi$ be the allocation corresponding to the above orientation. Then, for all agents $a$, we have $u_{a}(\Phi_{a}) = d$. By \Cref{lem:EForientation}, $\Phi$ is an EF (hence, EFX) allocation. Therefore, there is an EFX allocation with egalitarian welfare of at least $d$.

\paragraph{Reverse Direction.} Suppose there is EFX allocation with egalitarian welfare at least $d$. Then, every $w_i$ must receive at least one of its incident edges to derive a utility of at least $d$. This implies that there is at least one vertex $v$ in every $V_i$ does not get the edge $(v, w_i)$. In order to get a utility of at least $d$, each of such vertices $v$ must get all of its $d$ incident edges in $G$. This is feasible if and only if they form an independent set in $G$. Therefore, if there is EFX allocation with egalitarian welfare at least $d$, then MCIS is a yes-instance and this settles the claim.
\end{proof}

\begin{corollary}
\label{cor:Egalone}
 For $\{0, 1\}$-graphical instances, the Price of EFX with respect to Egalitarian welfare is $1$.
\end{corollary}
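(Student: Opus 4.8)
The plan is to combine two facts already established for $\{0,1\}$-graphical instances: by \Cref{thm:EFXutil}, a non-wasteful EFX allocation always exists and can be found efficiently, and by the corollaries to \Cref{lem:EForientation}, the price of EF with respect to utilitarian welfare is $1$. For the egalitarian claim, I would argue that the non-wasteful EFX allocation produced in the proof of \Cref{thm:EFXutil} already achieves the optimal egalitarian welfare, so that no loss is incurred by the EFX constraint.

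First I would observe that for $\{0,1\}$-valuations the egalitarian welfare of any allocation is the minimum, over all agents, of the number of positively-valued items that agent receives. The unconstrained egalitarian-optimal welfare is therefore determined by whether every agent can simultaneously be guaranteed at least one item it values: if some agent can be given a positively-valued item in every optimal allocation the egalitarian optimum is at least $1$, and otherwise it is $0$. The key step is to show that the EFX allocation from \Cref{thm:EFXutil} matches this optimum. In the cycle case and the tree-with-a-special-vertex case, every agent receives utility at least $1$, so the egalitarian welfare of the EFX allocation is at least $1$, which trivially matches any optimum. The only delicate case is a symmetric tree component $H_i$ with no special vertex: here the EFX allocation necessarily leaves one vertex empty-handed, giving egalitarian welfare $0$ on that component.

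The main obstacle, then, is to argue that in precisely this bad case the \emph{unconstrained} egalitarian optimum is also $0$, so that EFX loses nothing. I would establish this via a pigeonhole argument identical in spirit to Case~\ref{item:thm2case1} of \Cref{thm:EFexists01}: in a tree component, $|V(H_i)| = |E(H_i)| + 1$, and since every edge of $H_i$ is valued only by its two endpoints and all surrounding asymmetric/zero edges have already been stripped away, at least one vertex of $H_i$ must receive no positively-valued item under \emph{any} allocation. Hence the egalitarian welfare of the whole instance is $0$ whenever any such component exists, and the EFX allocation attains it. Combining the cases, the EFX allocation always achieves the egalitarian optimum.

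Finally I would phrase the conclusion in terms of the price of fairness: since the non-wasteful EFX allocation of \Cref{thm:EFXutil} realizes the maximum possible egalitarian welfare over all allocations, the supremum ratio defining the price of EFX with respect to egalitarian welfare is $1$, and the corollary follows. I expect the write-up to be short, with the only real content being the pigeonhole observation that the egalitarian-zero tree case is unavoidable for \emph{every} allocation, not merely for the EFX one.
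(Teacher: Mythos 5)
There is a genuine gap. Your argument rests on the claim that the non-wasteful EFX allocation constructed in \Cref{thm:EFXutil} achieves the \emph{unconstrained} egalitarian optimum, and you justify this in the cycle and special-vertex cases by saying that a guarantee of utility at least $1$ ``trivially matches any optimum.'' That is false: for $\{0,1\}$-valuations the egalitarian optimum is not capped at $1$. Take $K_5$ with every edge symmetric and valued at $1$ by both endpoints; an Eulerian orientation gives every agent in-degree $2$, so the egalitarian optimum is $2$, while the construction of \Cref{thm:EFXutil} only promises each agent one edge (and may orient the leftover edges arbitrarily, e.g., all to one vertex). So the specific allocation you exhibit can have egalitarian welfare strictly below the optimum, and your proof does not bound the price of fairness, which is defined via the \emph{best} EFX allocation. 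Your pigeonhole analysis of the symmetric-tree-without-special-vertex case is fine as far as it goes, but it only handles the subcase where the optimum is $0$.

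The paper argues in the opposite direction, which avoids this problem entirely: if the egalitarian optimum is $k>0$, take an egalitarian-optimal allocation $\Phi$, reassign every wasted edge to an arbitrary incident endpoint (no agent's utility drops, so it stays optimal and becomes an orientation), and observe that every agent now has utility at least $k \geq 1 = v_i^{max}$; by \Cref{lem:EForientation} this orientation is already EF, hence EFX. The case $k=0$ is vacuous. To repair your write-up you would need the same pivot: start from an arbitrary egalitarian-optimal allocation and show it can be made EFX without loss, rather than starting from the particular EFX allocation of \Cref{thm:EFXutil} and hoping it is egalitarian-optimal.
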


\begin{proof}
    If the optimal egalitarian welfare is $0$, then there is nothing to prove. Otherwise, consider the case when the optimal egalitarian welfare is $k>0$ and is achieved by the allocation $\Phi$. We have $u_i(\Phi_i) \geq k ~\forall~ i \in [n]$. If $\Phi$ is a wasteful allocation, then consider the re-allocation of all wastefully allocated edges to one of its incident vertices, arbitrarily. Since this re-allocation does not bring down the utility of any agent, it remains egalitarian optimal but now corresponds to a non-wasteful allocation and hence, an orientation. By \Cref{lem:EForientation}, we get that that this orientation is EF, since $v_i(\Phi_i) = k \geq 1 = v_i^{max} ~\forall~i \in [n]$. Therefore, we get an egalitarian optimal EF (hence, EFX) allocation and this settles our claim.
    \end{proof}

Next, we show that for binary graphical valuations, there is no loss in the Nash welfare as well. 

\begin{corollary}
\label{cor:Nashone}
 For $\{0, 1\}$-graphical instances, the Price of EFX with respect to Nash welfare is $1$.
\end{corollary}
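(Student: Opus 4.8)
The plan is to mirror the proof of \Cref{cor:Egalone} and show that for $\{0,1\}$-graphical instances, the maximum Nash welfare is always achievable by an EFX allocation, so no welfare is lost. Recall that the Nash welfare of an allocation $\Phi$ is the geometric mean of the utilities $\left(\prod_{i} u_i(\Phi_i)\right)^{1/n}$. As in the egalitarian case, the key observation is that we may restrict attention to non-wasteful allocations without decreasing Nash welfare: if some edge $e=(ij)$ is allocated wastefully to a vertex $k$ not incident on it, then $k$ derives zero value from it, and re-allocating $e$ to one of its incident endpoints can only increase (and never decrease) the utility of every agent, hence can only increase the Nash welfare. Thus there is always a non-wasteful allocation, i.e. an orientation, that maximizes Nash welfare.

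First I would handle the degenerate case: if the optimal Nash welfare is $0$, there is nothing to prove, since any EFX allocation (which exists by \Cref{thm:EFXutil}) attains welfare at least $0$. So suppose the optimal Nash welfare is strictly positive, achieved by some allocation $\Phi$; then every agent has $u_i(\Phi_i) \geq 1$. By the re-allocation argument above, I may assume $\Phi$ is non-wasteful, hence an orientation in $G$. Since all utilities are binary, $v_i^{max} = 1$ for every agent who values any item (and agents valuing nothing can be ignored), so $u_i(\Phi_i) \geq 1 = v_i^{max}$ for every relevant agent. By \Cref{lem:EForientation}, this non-wasteful allocation meeting the $v_i^{max}$ threshold for all agents is EF, hence EFX.

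This exhibits an EFX allocation whose Nash welfare equals the unconstrained optimum, so the ratio in the definition of the price of fairness is exactly $1$ for every instance, and the supremum is therefore $1$. I expect the only delicate point to be confirming that the wasteful-to-non-wasteful re-allocation genuinely does not decrease any individual utility when utilities are binary — which is immediate, since the moved edge contributes $0$ to the losing agent and a nonnegative amount to the receiving incident agent, and no other agent's bundle changes, so the product of utilities cannot decrease. Everything else follows directly from \Cref{lem:EForientation} exactly as in the proof of \Cref{cor:Egalone}.
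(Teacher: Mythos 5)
Your proposal is correct and follows essentially the same route as the paper: handle the zero-welfare case trivially, observe that positive Nash welfare forces $u_i(\Phi_i)\geq 1 = v_i^{max}$ for every agent, reduce to an orientation, and invoke \Cref{lem:EForientation} to conclude the allocation is EF and hence EFX. The only cosmetic difference is that you justify the passage to a non-wasteful allocation by the explicit re-allocation argument (as in \Cref{cor:Egalone}), whereas the paper simply asserts that Nash-optimal allocations are non-wasteful; your version is, if anything, slightly more careful.
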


\begin{proof}
    If the optimal Nash welfare is $0$, then there is nothing to prove. Otherwise, consider the case when the optimal Nash welfare is $k>0$ and is achieved by the allocation $\Phi$. Then, we must have $u_i(\Phi_i) \geq 1 ~\forall~ i \in [n]$. Since every Nash optimal allocation is non-wasteful, therefore, $\Phi$ is a non-wasteful allocation and hence, an orientation. By \Cref{lem:EForientation}, we get that that this orientation is EF, since $v_i(\Phi_i) \geq 1 = v_i^{max} ~\forall~i \in [n]$. Therefore, we get a Nash optimal EF (hence, EFX) allocation and this settles our claim.
\end{proof}

Since Nash optimal allocations can be found in polynomial time for general 
binary additive valuations \cite{DARMANN2015548}, we can find an EFX and Nash welfare maximizing allocation in polynomial time. 

We say that an allocation is a leximin allocation if, among all allocations, it lexicographically maximizes the utility profile, that is, maximizes the minimum utility, subject to that maximizes the
second minimum, and so on. Clearly, leximin allocations are also egalitarian maximal allocations. Moreover, for general binary additive valuations, the set of leximin and Nash optimal allocations coincide \cite{10.1007/978-3-030-64946-3_26}, and hence a Nash optimal allocation is also maximizes egalitarian welfare. Since Nash optimal allocations are also non-wasteful, they also maximize the utilitarian welfare. This gives us the following result.

\begin{corollary}
    For $\{0,1\}$-graphical instances, an EFX allocation that maximizes Utilitarian, Egalitarian, and Nash welfare always exists and can be found in polynomial time.
\end{corollary}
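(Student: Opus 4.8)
The plan is to exhibit a \emph{single} allocation that is simultaneously EFX, utilitarian maximal, egalitarian maximal, and Nash maximal, and to produce it in polynomial time. The natural candidate is a Nash-optimal (maximum Nash welfare) allocation $\Phi$, which for general binary additive valuations -- and hence in particular for $\{0,1\}$-graphical instances -- can be computed in polynomial time by the algorithm of \cite{DARMANN2015548}. I would then verify that this one $\Phi$ inherits all four properties.

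First I would dispose of the three welfare guarantees, which are bookkeeping on top of the cited structural facts. By construction $\Phi$ maximizes Nash welfare. Since the sets of leximin and Nash-optimal allocations coincide for binary additive valuations \cite{10.1007/978-3-030-64946-3_26} and leximin allocations are egalitarian maximal, $\Phi$ also maximizes egalitarian welfare. Finally, a Nash-optimal allocation never wastes an item (moving a wasted item to an agent who values it cannot decrease the product and can only increase the number of agents with positive utility), so $\Phi$ is non-wasteful; because valuations are binary, a non-wasteful allocation assigns every positively valued item to an agent valuing it at $1$, which is exactly utilitarian optimal. I may further assume without loss of generality that the edges valued $0$ by both endpoints (which are welfare-irrelevant) are placed on non-envied agents, as in \Cref{thm:EFXutil}; this leaves every welfare measure unchanged.

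The main work, and the step I expect to be the real obstacle, is establishing EFX. If the optimal Nash welfare is positive, then every agent receives utility at least $1 = v_i^{max}$, and \Cref{lem:EForientation} immediately yields that $\Phi$ is EF, hence EFX. The delicate case is when the optimal Nash welfare is $0$: some agent $i$ that values an edge may then unavoidably end up with $u_i(\Phi_i)=0$ (already the single-edge instance forces this), and \Cref{lem:EForientation} no longer applies. Here I would argue EFX directly from the graphical structure. Since $\Phi$ is non-wasteful, each $\Phi_j$ consists of incident edges that $j$ values, and the only item of $\Phi_j$ that $i$ values is the unique shared edge $(ij)$, so $u_i(\Phi_j)\le 1$; moreover $(ij)\in\Phi_j$ together with non-wastefulness forces $(ij)$ to be symmetric, so $j$ values it too. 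Removing $(ij)$ already kills the envy, so an EFX violation can arise only when $i$ values $(ij)\in\Phi_j$, has $u_i(\Phi_i)=0$, and $\Phi_j$ contains a second item that $i$ values at $0$. If that second item is an edge $j$ values at $1$, then $u_j(\Phi_j)\ge 2$ and reassigning $(ij)$ to $i$ raises $i$ from $0$ to $1$ while keeping $u_j(\Phi_j)\ge 1$, strictly increasing the number of agents with positive utility and contradicting the lexicographic optimality of $\Phi$ used by \cite{DARMANN2015548}. If instead the second item is an edge valued $0$ by both endpoints, the arrangement placing all such edges on non-envied agents already forbids it from sitting in the envied bundle $\Phi_j$. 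Either way no violation survives, so every potentially envied bundle is a singleton and EFX holds.

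Combining the four properties, $\Phi$ is the claimed allocation and it is produced in polynomial time, which settles the statement. The one genuinely nontrivial point is the zero-welfare EFX case above; everything else is a direct assembly of \Cref{lem:EForientation}, \Cref{thm:EFXutil}, and the two cited facts about Nash-optimal allocations for binary valuations.
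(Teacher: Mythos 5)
Your proposal follows essentially the same route as the paper: take a Nash-optimal allocation, use the leximin/MNW coincidence for binary valuations to get egalitarian optimality and non-wastefulness to get utilitarian optimality, and invoke \Cref{lem:EForientation} for envy-freeness (hence EFX) once every agent has utility at least $1 = v_i^{max}$. The only place you genuinely diverge is the degenerate case where the optimal Nash welfare is $0$: the paper glosses over it (\Cref{cor:Nashone} declares ``nothing to prove'' and the combined statement implicitly falls back on the non-wasteful EFX allocation of \Cref{thm:EFXutil}, for which egalitarian and Nash welfare are trivially optimal at $0$), whereas you argue directly that the leximin-refined Nash optimum is itself EFX via a local exchange that strictly increases the number of positively served agents. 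Your exchange argument is correct, but it does require the refined tie-breaking (first maximize the number of agents with positive utility) rather than bare product maximization, since when the optimal product is $0$ every allocation is ``Nash optimal'' under the naive definition; you flag this, and the leximin coincidence the paper cites supplies exactly that refinement. Both routes also share the same small unproved step of parking edges valued $0$ by both endpoints on non-envied agents, whose existence follows from acyclicity of the envy graph of a Pareto-optimal allocation; neither you nor the paper spells that out.
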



\section{Concluding Remarks.} We studied the complexity of finding envy-free allocations for graphical valuation and quantified the loss of welfare in the process of achieving approximate (i.e, EFX) envy-freeness, which was the original motivation for the study of the class of graphical valuations. We believe there are several directions of interest for future work that build on our preliminary line of inquiry here. For instance, for parameterized results, one could consider structural parameters that are smaller than vertex cover. Extending the PoF discussion beyond binary setting for other welfare notions would also be of interest. Finally, one might also generalize the class of graphical valuations in many ways. One generalization is to allow graphs with multiedges which then corresponds to instances where an item is liked by at most two agents but a pair of agents together can derive positive value from more than one item. The restricted setting of hypergraphs where every edge corresponds to multiple but the same number of vertices is also an interesting direction to pursue.



%
%
%
%
\bibliographystyle{splncs04}
\bibliography{main.bib}
%






\end{document}